\documentclass{iac}
\usepackage{placeins}
\usepackage{pifont}
\usepackage{soul}
\usepackage{algorithm}
\usepackage{algorithmic}
\usepackage{tikz}
\usetikzlibrary{arrows.meta,positioning,calc}
\definecolor{revisionpink}{RGB}{220,20,120}

\newtheorem{lemma}{Lemma}[section]

\hypersetup{hidelinks}
\begin{document}

\IACconference{77}
\IAClocation{Antalya, Türkiye}
\IACdates{5--9 Oct}
\IACyear{2026}
\IACpapernumber{B2,5,8,x113206}
\IACcopyright{2026}{the International Astronautical Federation (IAF)}

\title{Goal-Oriented Bundle Management and Flow Control\\ for Deep-Space
Communications via Reinforcement Learning
}

\IACauthor*{Sajjad Baghaee}{}{}

\IACauthor*{Aimin Li}{}{}

\IACauthor{Shanmugarajan Balasubramanian}{}{}

\IACauthor{Elif Uysal}{}{}

\IACsharedaffiliation
{Communication Networks Research Group (CNG), Dept. of Electrical and
Electronics Engineering, Middle East Technical University (METU), Ankara,
Türkiye}
{\textsuperscript{a}\uline{sajjad@baghaee.com},
\textsuperscript{b}\uline{aimin@metu.edu.tr},
\textsuperscript{c}\uline{e279039@metu.edu.tr},
\textsuperscript{d}\uline{uelif@metu.edu.tr}}
\abstract{%
Deep-space communications operate under intermittent connectivity, long propagation delays, and relay storage constraints. To span vast distances, data are often forwarded through relay nodes, where incoming traffic can vary over time. In such networks, the Bundle Protocol encapsulates data into store-carry-forward units called bundles, which are stored at intermediate nodes and forwarded only when contact opportunities become available. Therefore, relay-side buffer and flow management become critical challenges, since limited storage and scarce contact windows force the relay to decide which bundles to transmit or discard. This decision process is inherently goal-oriented: the relay should not merely forward data, but should allocate limited transmission opportunities to the bundles that are most valuable to the downstream mission objective. Existing schedulers, such as FIFO, are heuristic and do not explicitly optimize task utility; moreover, they often lead to unfair service across heterogeneous planetary sources. This paper presents a closed-loop modeling and learning framework for relay-side bundle and flow management in a contact-driven deep-space network, where updates generated by lunar and Martian sources are forwarded via a relay to Earth. These bundles are stored in a shared relay buffer and carry metadata including source and destination identities, generation time, and time-to-live. On Earth, data utility depends on freshness, which we quantify via the excess age $\Delta_i^{e}(t) = {\left[\Delta_i(t) - d_i(t)\right]}^+$, where $\Delta_i(t)$ denotes the Age of Information (AoI) of source $i$, i.e., the time elapsed since the freshest update currently available on Earth was generated, and $d_i(t)$ is a source-dependent delay baseline. Based on this metric, we define a goal-oriented system loss that penalizes excess age in a source-dependent and nonlinear manner, allowing the relay to prioritize information according to mission relevance. We formulate relay control as a constrained semi-Markov decision process (CSMDP) and solve it using masked Lagrangian Proximal Policy Optimization (PPO), where feasibility-aware action masking restricts decisions to transmissions that can finish within the remaining contact window. Simulation results in DTNSim indicate that the proposed approach significantly improves age-based utility and achieves substantially better inter-flow fairness compared to baseline schedulers. By explicitly prioritizing information according to downstream task relevance and using this signal to guide transmission and dropping decisions, the proposed framework improves goal-oriented information delivery in deep-space networks and, to our knowledge, is among the first reinforcement learning frameworks to jointly optimize relay-side bundle scheduling and dropping, as well as flow control under contact-driven dynamics with explicit fairness considerations across heterogeneous sources.\vspace{-3mm}}

\maketitle

\section{Introduction}
\label{sec:introduction}

Deep-space networks rely on intermittent contacts and store-carry-forward relaying to communicate across long propagation delays. Delay/Disruption-Tolerant Networking (DTN) and the Bundle Protocol (BP) enable store-carry-forward communication in such environments \cite{rfc4838,rfc9171}, but leave a fundamental relay-side allocation problem unresolved: when multiple buffered bundles compete for a short contact window, which should be transmitted, and which should be dropped under buffer pressure. This decision becomes challenging in heterogeneous deep-space networks, where lunar and Martian traffic share a finite relay buffer but differ significantly in contact frequency and propagation delay. Simple queueing policies such as FIFO, LIFO, or fixed priority are therefore inadequate, as they may either waste scarce contact opportunities on outdated information or systematically disadvantage one source. To reason about this trade-off, we consider destination-side information freshness. Age of Information (AoI) captures the time elapsed since the most recently received update was generated \cite{aoi_foundations,aoi_multiple_sources}. However, in deep-space settings, raw AoI conflates two effects: the relay’s scheduling decisions and the underlying contact structure of each source. To separate these effects, we measure freshness relative to a contact-aware baseline di(t) and define the excess age $\Delta_i^{e}(t)=[\Delta_i(t)-d_i(t)]^+$. This allows us to penalize only the staleness attributable to relay decisions, while avoiding bias toward inherently disadvantaged sources.


To address this problem, we develop a relay-side scheduling and dropping model for deep-space DTNs, where bundle decisions are driven by downstream information utility under finite storage and intermittent contacts. We consider an Earth–Relay–Moon–Mars network with a shared relay buffer. The relay operates in an event-driven manner: at each decision epoch, it may transmit a feasible bundle, {resolve buffer overflow through rejection or eviction}, or remain idle. Since freshness at the destination is only updated upon Earth delivery, {pending deliveries are retained in the system state so that the delayed reception dynamics remain Markovian.} These dynamics lead to a constrained semi-Markov decision process (CSMDP), which we solve using masked Lagrangian Proximal Policy Optimization (PPO) with permutation-invariant encoding of the variable-size buffer state.

{Our main \textbf{contribution} is a goal-oriented framework for deep-space DTNs that jointly controls relay scheduling and dropping based on downstream information utility rather than queue order. It combines: (i) a closed-loop model with finite relay storage, scheduled contacts, stochastic source traffic, heterogeneous bundle sizes, and delayed destination updates; (ii) a contact-aware excess-age objective with explicit inter-source fairness constraints; (iii) an event-driven CSMDP formulation for joint scheduling and dropping; and (iv) masked Lagrangian PPO with permutation-invariant encoding of the variable-size relay buffer. The framework is implemented in DTNSim.}
\vspace{-2mm}
\section{Background and Related Work}
\subsection{Relay-Local DTN Resource Management}
DTN and the Bundle Protocol (BP) enable persistent store-carry-forward
communication across disrupted links while leaving relay-local bundle ordering
and discard policies unspecified \cite{rfc4838,rfc9171,caini2011satdtn}.
In scheduled space networks, Contact Graph Routing (CGR) and related methods
exploit contact plans to construct time-dependent routes
\cite{araniti2015cgr,fraire2021cgrtutorial}, with extensions addressing
heterogeneous traffic, contact capacity, and finite buffers
\cite{madoery2023heterogeneous,alhajj2024cgr}. These mechanisms primarily determine where and when bundles can be forwarded across the network.

Our focus is complementary: once multiple bundles coexist at a finite-buffer relay, the node must decide which bundle should use the next outgoing contact and which bundle should be discarded when storage becomes scarce. Such relay-local decisions can influence end-to-end performance under limited contact capacity \cite{LE2026108432}. Conventional rules such as FIFO, LIFO,
fixed-priority, and TTL-based policies are simple but do not account for the
freshness already available at the destination, motivating freshness-aware
joint scheduling and dropping {based on the downstream value of each candidate transmission}.

\vspace{-2mm}
\subsection{Freshness and Goal-Oriented Utility}

Age of Information (AoI) provides a natural measure of destination-side information freshness, and AoI-aware scheduling can substantially affect information timeliness
\cite{xie2024aoi_scheduling,tripathi2023freshcsma}. Related work has also examined fairness in freshness-oriented systems, since minimizing an aggregate age metric can produce highly uneven service across heterogeneous sources \cite{jones2024aoifairness,zhu2026fairnessaoi}.

Deep-space networks introduce an additional \textit{asymmetry}: sources may experience
substantially different propagation delays and contact opportunities even under
the same relay policy. We therefore evaluate freshness relative to a policy-independent reference induced by the known contact plan, and combine this relative freshness objective with explicit source-specific service constraints. This is consistent with goal-oriented
communication, where resources are allocated according to downstream utility
rather than delivered data volume alone \cite{li2024goal_framework,li2024goal_tensor}.

\vspace{-3mm}
\subsection{Learning-Based Control and Research Gap}

Learning-based methods have been studied for communication scheduling, routing,
and resource management, including constrained satellite routing and PPO-based
congestion control in deep-space DTNs
\cite{lyu2024constrainedrouting,yang2024deepdtn}. The central difficulty in our setting is temporal: a relay must decide which
bundle to transmit or discard using its current buffer state, while the value
of that decision is realized only when the transmitted information
reaches Earth. During this propagation interval, new bundles may arrive, others
may expire, contact conditions may change, and subsequent buffer-management
decisions may have been made. Relay actions are therefore coupled across
irregular decision epochs through delayed destination-side freshness.

This structure calls for an event-driven control formulation that preserves
the delayed effect of relay actions and the changing set of feasible decisions.
We formulate the problem as a constrained SMDP, where event-driven transitions
capture irregular decision times and delayed destination updates,
feasibility-aware action masking removes transmissions that cannot be completed
within the remaining contact window
\cite{huang2022masking}, and constrained policy optimization incorporates
operational and source-specific service requirements
\cite{achiam2017cpo,sun2024constrained}.
The formulation links relay-side scheduling and dropping decisions to
the freshness realized at Earth under stochastic, heterogeneous
traffic.

\vspace{-2mm}
\section{Deep-Space DTN System Model}
\label{sec:system}

\subsection{Network and Contact Model}

We consider a relay-side scheduling and dropping model for deep-space DTNs, where limited relay storage and intermittent contact opportunities require goal-oriented bundle management. As a representative scenario, we consider four nodes: Earth $E$, relay $R$, lunar source $M$, and Martian source $S$, with source set $\mathcal{I}\triangleq\{M,S\}$. Both sources generate time-stamped status-update bundles destined for Earth. Direct source--Earth delivery is excluded so that every bundle traverses the finite-buffer relay, the only node at which scheduling and dropping are controlled. Figure~\ref{fig:topology} shows this representative topology.

\begin{figure}
\centering
\includegraphics[width=\columnwidth]{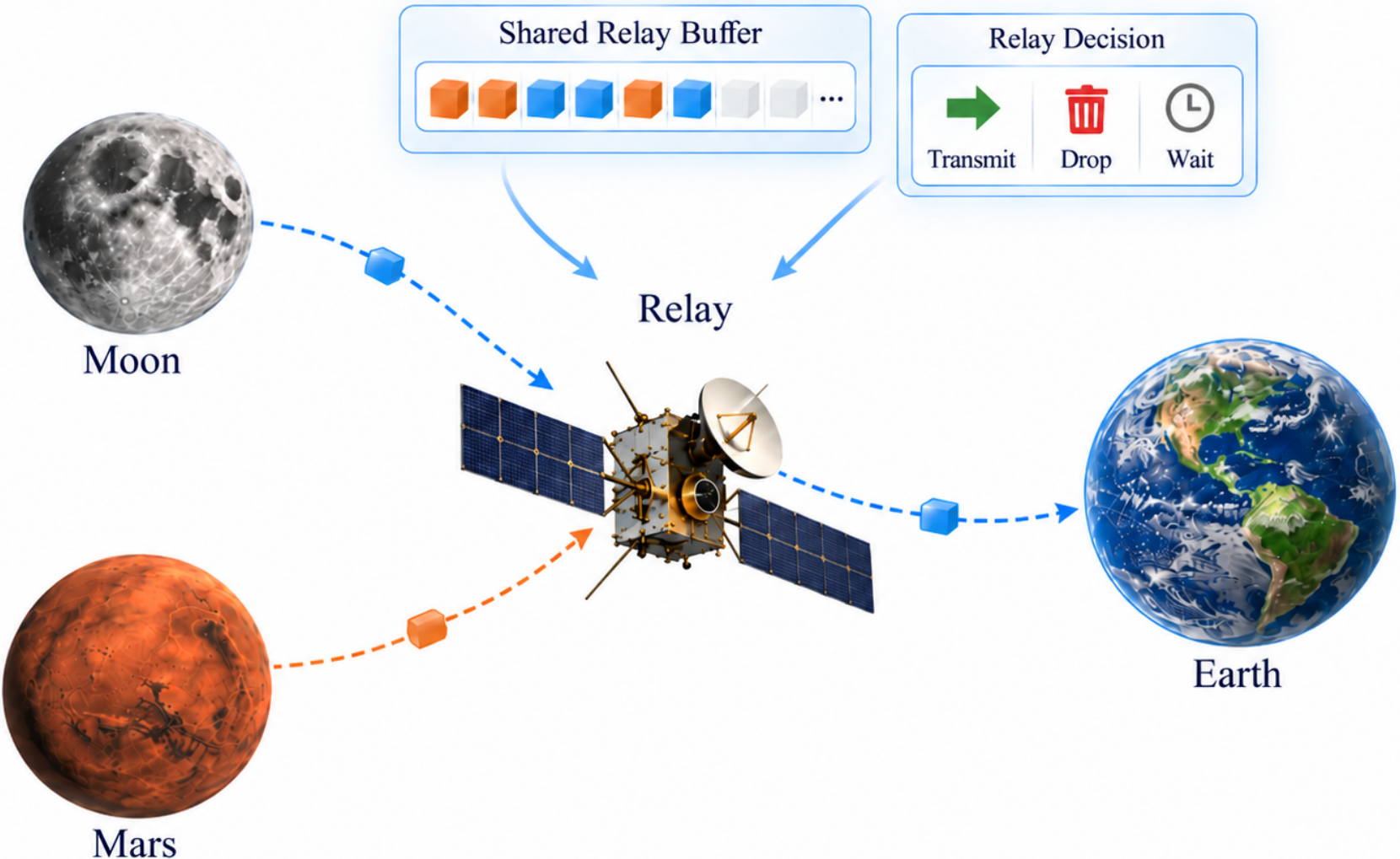}
\caption{Earth--Relay--Moon--Mars network.}
\label{fig:topology}
\end{figure}

Connectivity is specified by contact plans
\begin{equation}
\mathcal{C}_{xy}\triangleq
\left\{
\left(
t_{xy,s}^{(q)},t_{xy,e}^{(q)},
\sigma_{xy}^{(q)},\tau_{xy}^{(q)}
\right)
\right\}_{q=1}^{K_{xy}},
\end{equation}
where $x$ and $y$ denote link endpoints, $q$ indexes contact segments, and
$K_{xy}$ is the number of segments on $x\!\rightarrow\!y$; each segment
specifies its start/end times, usable rate, and one-way propagation delay.
We use the links $M\!\rightarrow\!R$, $S\!\rightarrow\!R$, and
$R\!\rightarrow\!E$. During active relay--Earth segment $q$,
$\sigma_{RE}(t)=\sigma_{RE}^{(q)}$,
$\tau_{RE}(t)=\tau_{RE}^{(q)}$, and
$t_{RE}^{end}(t)=t_{RE,e}^{(q)}$; otherwise
$\sigma_{RE}(t)=\tau_{RE}(t)=0$ and
$t_{RE}^{end}(t)=t$. Thus, the downlink is available iff
$\sigma_{RE}(t)>0$.

\subsection{Source Generation and Source--Relay Transport}
\label{subsec:source_transport}

Bundles are generated independently at the Moon and Mars according to Poisson
processes with heterogeneous inter-generation times and randomly drawn sizes. Each source maintains a non-preemptive FIFO queue; bundles accumulate during
contact gaps and are transmitted head-of-line when a source--relay contact is
active. A bundle is admitted to service only if its full serialization fits
within the remaining contact interval.


For source $i\in\mathcal I$, if bundle $b$ begins transmission at
$t_i^{tx,s}(b)$ over contact segment $q$, its source-link completion and relay
arrival times are
\begin{equation}
t_i^{tx,e}(b)
=
t_i^{tx,s}(b)
+\frac{L(b)}{\sigma_{iR}^{(q)}},
\qquad
t^{arr}(b)
=
t_i^{tx,e}(b)+\tau_{iR}^{(q)},
\end{equation}

with the feasibility condition
\begin{equation}
t_i^{tx,e}(b)\leq t_{iR,e}^{(q)}.
\end{equation}
Thus, only transmission completion must occur before the contact closes;
relay reception may occur later due to propagation delay.

\subsection{Bundle and Buffer Model}

Each bundle is represented as
\begin{equation}
b=
\bigl(
i(b),g(b),L(b),T^{ttl}(b),
t^{arr}(b),\omega(b),id(b)
\bigr),
\label{eq:bundle_representation}
\end{equation}
where $i(b)$ is the source, $g(b)$ the update-generation time, $L(b)$ the
size, $T^{ttl}(b)$ the lifetime, $t^{arr}(b)$ the relay-arrival time,
$\omega(b)$ an optional importance weight, and $id(b)$ a unique identifier.
Its expiry time is
$t^{exp}(b)\triangleq g(b)+T^{ttl}(b)$, and it is valid at time $t$ only
if $t<t^{exp}(b)$.

Let $Q_i(t)$ be the multiset of buffered source-$i$ bundles. The
shared relay buffer is
$\mathcal{B}(t)\triangleq\biguplus_{i\in\mathcal I}Q_i(t)$, with occupancy
$B(t)\triangleq\sum_{b\in\mathcal B(t)}L(b)\leq B_{\max}$.
An arriving bundle is admitted when capacity permits; otherwise the controller
must reject it or evict a bundle so that the post-decision occupancy
does not exceed $B_{\max}$. We assume each bundle fits within the relay buffer,
i.e., $L(b)\leq B_{\max}$.

\vspace{-1mm}
\subsection{Relay--Earth Service and Feasibility}

The relay--Earth downlink is a non-preemptive single server. Let
$busy(t)\in\{0,1\}$ indicate whether a transmission is active,
$b^{srv}(t)$ the in-service bundle, and $t^{cmp,srv}(t)$ its completion
time, fixed when transmission begins.
When the server is busy, $\tau^{srv}_{RE}(t)$ denotes the propagation delay
of the relay--Earth contact segment used for the current transmission and
remains fixed until completion. When the server is idle,
$b^{srv}(t)=\emptyset$, $t^{cmp,srv}(t)=+\infty$, and
$\tau^{srv}_{RE}(t)=0$. For a bundle $b$ selected at time $t$, its transmission, relay-completion,
and Earth-delivery times are

\begin{equation}
\begin{aligned}
t^{tx}(b,t)&=\frac{L(b)}{\sigma_{RE}(t)},\\
t^{cmp}(b,t)&=t+t^{tx}(b,t),\\
t^{del}(b,t)&=t^{cmp}(b,t)+\tau_{RE}(t),
\end{aligned}
\end{equation}
with $t^{tx}(b,t)=+\infty$ when $\sigma_{RE}(t)=0$.
When bundle $b$ starts service at time $t$,
$t^{cmp,srv}(t^+)=t^{cmp}(b,t)$.

We use a non-preemptive, all-or-nothing service model: once transmission begins, the bundle must complete within the same active contact and cannot be resumed later. Accordingly, bundle $b$ is schedulable at time $t$ only if
\begin{equation}
t^{cmp}(b,t)\leq t_{RE}^{end}(t),
\qquad
t^{del}(b,t)<t^{exp}(b).
\end{equation}
The first condition ensures transmission completion before contact termination,
while the second requires Earth reception before bundle expiry.

\vspace{-1mm}
\subsection{Pending Deliveries and Observability}

Propagation separates relay transmission completion from Earth reception.
We retain the pending-delivery multiset
$\mathcal{G}(t)\subseteq
\mathbb{R}_{\geq0}\times\mathcal I\times\mathbb{R}_{\geq0}$,
where $(\hat t,\hat\imath,\hat g)$ records an Earth-delivery time,
source, and generation timestamp. Completion of bundle $b$ inserts
$(t^{del}(b,t),i(b),g(b))$ into $\mathcal G$; at $\hat t$, the record is
removed and Earth-side freshness is updated as defined in
Subsection~\ref{subsec:earth_side_aoi}. Retaining $\mathcal G(t)$ preserves
the Markov state under propagation delay.

Under the {deterministic contact plan and transmission/propagation dynamics}, the relay can reconstruct
pending deliveries and Earth-side timestamps from its prior actions and the
known contact plan; instantaneous Earth feedback is therefore unnecessary.
{The policy observes only relay-available information and
the known contact plan.}
The SMDP state is defined in
Section~\ref{sec:event_driven_smdp}, while the policy feature
representation is defined in
Subsection~\ref{subsec:bundle_token_features}.

\section{Closed-Loop Freshness and Excess-Age Objective}
\label{sec:freshness_objective}

Relay-side transmission affects freshness only when the corresponding pending
delivery reaches Earth. We therefore define destination-side Age of Information
(AoI) and compare it with a source-dependent baseline derived from the known
contact plan.

\vspace{-1mm}
\subsection{Earth-Side Freshness}
\label{subsec:earth_side_aoi}

For each source $i\in\mathcal{I}$, let $u_i(t)$ denote the generation
timestamp of the freshest source-$i$ update delivered to Earth by time $t$.
When a pending delivery
$(\hat t,\hat{\imath},\hat g)\in\mathcal{G}(t^-)$ reaches Earth at
$t=\hat t$,
{$u_{\hat{\imath}}(t)\leftarrow
\max\{u_{\hat{\imath}}(t^-),\hat g\}$.
The AoI is $\Delta_i(t)\triangleq t-u_i(t)$.}
The maximum prevents an out-of-order delivery from replacing fresher
information at Earth. Relay transmission completion does not update
$u_i(t)$; only Earth reception can advance this timestamp, while
$\Delta_i(t)$ increases linearly between useful deliveries.
\vspace{-1mm}
\subsection{Contact-Plan Baseline}
Raw AoI reflects both relay-controlled effects and source-dependent contact and
propagation conditions. To account for this heterogeneity, we define a
policy-independent baseline based on idealized, contention-free forwarding
over the known contact plan.

For an $L$-bit bundle available at node $x$ at time $s$, its earliest feasible
arrival at node $y$ is

\begin{equation}
\begin{gathered}
\mathsf{A}_{xy}(s;L)
\triangleq\\[-0.6mm]
\min_{\substack{1\leq q\leq K_{xy}:\\
\eta_{xy}^{(q)}(s)+L/\sigma_{xy}^{(q)}
\leq t_{xy,e}^{(q)}}}
\left[
\eta_{xy}^{(q)}(s)
+\frac{L}{\sigma_{xy}^{(q)}}
+\tau_{xy}^{(q)}
\right],
\end{gathered}
\label{eq:earliest_link_arrival}
\end{equation}
where
$\eta_{xy}^{(q)}(s)\triangleq
\max\{s,t_{xy,s}^{(q)}\}$ and $\mathsf{A}_{xy}(s;L)=+\infty$ if no contact can carry the complete bundle. 

Let $L^{\mathrm{ref}}$ denote a fixed reference bundle size, common to all
sources and policies. For a hypothetical reference update generated at time
$g$, define
\vspace{-1.5mm}
\begin{equation}
\begin{aligned}
a_i(g)
&\triangleq
\mathsf{A}_{RE}\!\left(
\mathsf{A}_{iR}(g;L^{\mathrm{ref}});
L^{\mathrm{ref}}
\right),\\
\underline{u}_i(t)
&\triangleq
\sup\{g\leq t:a_i(g)\leq t\},\\
d_i(t)
&\triangleq
t-\underline{u}_i(t).
\end{aligned}
\label{eq:physical_lower_bound}
\vspace{-1mm}
\end{equation}
Here, $d_i(t)$ represents the freshness achievable by the reference update
under idealized forwarding. Because it depends only on the contact plan and
$L^{\mathrm{ref}}$, it is independent of relay scheduling, queueing,
contention, dropping, and the realized traffic process.
\vspace{-1.5mm}
\subsection{Excess-Age Objective}

The excess age and instantaneous goal-oriented freshness cost are
\vspace{-1.2mm}
\begin{equation}
\begin{aligned}
\Delta_i^{e}(t)
&\triangleq
[\Delta_i(t)-d_i(t)]^+,\\
G(t)
&\triangleq
\sum\nolimits_{i\in\mathcal{I}}
w_i\phi_i\!\left(\Delta_i^{e}(t)\right),
\end{aligned}
\label{eq:instantaneous_system_cost}
\vspace{-2.2mm}
\end{equation}
where $[x]^+\triangleq\max\{x,0\}$, $w_i>0$ denotes the relative mission
importance of source $i$, and
$\phi_i:\mathbb{R}_{\geq0}\rightarrow\mathbb{R}_{\geq0}$ is nonnegative
and nondecreasing. Thus, $\Delta_i^{e}(t)$ measures Earth-side staleness
relative to the corresponding source-dependent contact-plan baseline.
Over an episode $[t_0,T]$, the relay minimizes the expected cumulative
freshness cost
{$J_G\triangleq\mathbb{E}\!\left[\int_{t_0}^{T} G(t)\,dt\right]$},
subject to the operational and source-specific constraints introduced later.
Its event-wise decomposition defines the SMDP reward in
Section~\ref{subsec:transition_kernel}.

\vspace{-1mm}
\vspace{-1mm}
\section{Event-Driven SMDP Formulation}
\label{sec:event_driven_smdp}

Relay-side scheduling and dropping occur at irregular times determined by contact changes, bundle arrivals, transmission completions, Earth deliveries, and expirations. We therefore model relay control as a semi-Markov decision process (SMDP) \cite{sutton1999smdp}. Let $\{t_k\}_{k\geq0}$ denote the strictly increasing
decision epochs and $\delta_k\triangleq t_{k+1}-t_k$ the holding time. Co-timed events are processed as one batch before control. If an arrival creates buffer overflow, the controller may take a finite sequence of rejection or eviction decisions at the same physical time until the overflow is resolved. A scheduling decision is then made when transmission is possible. These same-time control stages do not constitute separate SMDP transitions; physical time advances only after all applicable stages have been completed.

\vspace{-1.7mm}
\subsection{Events, State, and Action Sets}
\label{subsec:event_types_ordering}
\label{subsec:state_mode_action_sets}

The physical event types are \emph{contact boundary}, \emph{transmission completion},
\emph{Earth delivery}, \emph{relay expiry}, and \emph{relay arrival}.
Co-timed events are processed in this fixed order. This convention first
establishes the active contact state, then resolves transmission and destination
updates, removes expired bundles, and finally processes new arrivals, yielding a
unique post-event state. An additional controller wake-up may be scheduled after an elective wait, as described below; it changes no physical state by itself.
After processing events at $t_k$, the relay-side control state is
\vspace{-2mm}
\begin{align}
s_k\triangleq\Big(
&\mathcal{B}(t_k),B(t_k),
busy(t_k),b^{srv}(t_k),t^{cmp,srv}(t_k),\notag\\[-0.8mm]
&t_k,\sigma_{RE}(t_k),\tau^{srv}_{RE}(t_k),
\tau_{RE}(t_k),t^{end}_{RE}(t_k),
\label{eq:smdp_state}\\[-0.8mm]
&\{u_i(t_k)\}_{i\in\mathcal I},\mathcal{G}(t_k),
\{D_i(k)\}_{i\in\mathcal I},b_k^{new}
\Big).\notag
\vspace{-2mm}
\end{align}
The physical, service, and freshness variables are defined in
Sections~\ref{sec:system}--\ref{sec:freshness_objective}.
The variables $D_i(k)$ are the fairness deficits defined in
Section~\ref{sec:fairness_modeling}, while $b_k^{new}$ is nonempty only when
a newly arrived bundle requires an overflow decision. Equation~\eqref{eq:smdp_state} collects the relay-side control state from
which the policy observation is constructed. Let $h_k$ denote the upstream simulator state, including source
FIFO queues, active source--relay transmissions, and in-flight source--relay
bundles. The environment state is
$
x_k\triangleq (s_k,h_k).
$
The environment evolves Markovianly on $x_k$, whereas the relay policy acts only on information derived from $s_k$. The hidden upstream state $h_k$, together with source-generation and bundle-size draws, determines future relay arrivals but is not exposed
to the controller.

For scheduling, define the currently feasible bundle set
\begin{equation}
\begin{aligned}
\mathcal{F}(s_k)
\triangleq
\Big\{
b\in\mathcal{B}(t_k):
&t^{cmp}(b,t_k)\leq t^{end}_{RE}(t_k),\\[-0.7mm]
&t^{del}(b,t_k)<t^{exp}(b)
\Big\}.
\end{aligned}
\label{eq:feasible_schedule_set}
\vspace{-2mm}
\end{equation} 
When the relay--Earth server is idle,
the contact is active, and $\mathcal{F}(s_k)\neq\emptyset$, the scheduling action set is
\begin{equation}
\mathcal{A}^{sched}(s_k)
\triangleq
\mathcal{F}(s_k)\cup\{\mathrm{wait}\}.
\label{eq:scheduling_action_set}
\vspace{-2mm}
\end{equation}
Here, $\mathrm{wait}$ is an elective action. If $\mathcal{F}(s_k)=\emptyset$, no scheduling decision is required.

Overflow is resolved through one or more same-time drop stages. Let
$s_k^{d,0}\triangleq s_k$. Whenever a pending arrival
$b_k^{new}\neq\emptyset$ would violate the shared relay capacity, the
available drop actions at stage $j$ are
\begin{equation}
\mathcal{A}^{drop}(s_k^{d,j})
\triangleq
\{b_k^{new}\}\cup\mathcal{B}(s_k^{d,j}).
\label{eq:drop_action_set}
\vspace{-2mm}
\end{equation}
The drop stage is invoked only while admission of $b_k^{new}$ would violate
the shared relay-buffer capacity. Selecting $b_k^{new}$ rejects the arriving
bundle and terminates overflow resolution. Selecting
$b\in\mathcal{B}(s_k^{d,j})$ evicts that buffered bundle. If the remaining
capacity is still insufficient to admit $b_k^{new}$, another drop stage is
performed at the same physical time using the updated buffer state. Otherwise,
$b_k^{new}$ is admitted and the overflow sequence terminates.
For selected drop action $d_{k,j}$, define
$s_k^{d,j+1}
\triangleq
\Psi^{drop}(s_k^{d,j},d_{k,j}).$
Let $J_k$ denote the total number of rejection or eviction decisions executed
during overflow resolution at physical epoch $t_k$. Each such decision
discards one bundle, and $J_k=0$ when no overflow action is required. 
The resulting post-overflow state is
$
s_k^{d}\triangleq s_k^{d,J_k}
$,
with $s_k^{d}=s_k$ when $J_k=0$. 
The controller therefore executes state-dependent drop sequence
$d_{k,0},\ldots,d_{k,J_k-1}$, where each action is selected from the intermediate state produced by the
preceding stage. After overflow resolution, scheduling is reconsidered. If the
server is idle, $\sigma_{RE}(t_k)>0$, and
$\mathcal F(s_k^{d})\neq\emptyset$, then
$q_k\in\mathcal A^{\mathrm{sched}}(s_k^{d})$; otherwise
$q_k=\emptyset$. The complete control applied at epoch $t_k$ is
{$a_k\triangleq(\mathbf d_k,q_k)$, where
$\mathbf d_k=(d_{k,0},\ldots,d_{k,J_k-1})$ is the drop sequence,
empty when no overflow occurs.}
\vspace{-2mm}
\subsection{Post-Decision State and Next Event}
After the drop sequence has been completed, the scheduling component is
applied:
$\bar{s}_k
\triangleq
\Psi^{sched}(s_k^{d},q_k),$
where $\Psi^{sched}(s,\emptyset)=\Psi^{sched}(s,\mathrm{wait})=s$.
If
$q_k=b\in\mathcal{F}(s_k^{d})$, bundle $b$ is removed from the relay buffer
and the server state is set to
$busy=1$, $b^{srv}=b$,
$t^{cmp,srv}=t_k+t^{tx}(b,t_k)$, and
$\tau^{srv}_{RE}=\tau_{RE}(t_k)$.
No holding time, reward interval, or discount factor is introduced between
same-time drop stages or between overflow resolution and scheduling.

Let
$T_{\mathrm{exp}}$, $T_{\mathrm{ct}}$, $T_{\mathrm{tx}}$,
$T_{\mathrm{del}}$, and $T_{\mathrm{arr},k}$ denote, respectively, the next
buffered-bundle expiry, relay--Earth contact boundary, transmission completion,
pending Earth delivery, and relay arrival. The candidate times are
\begin{align}
&T_{\mathrm{exp}}(\bar{s}_k)
\triangleq
\min\left\{
t^{exp}(b):
b\in\bar{\mathcal B}_k,\;
t^{exp}(b)>t_k
\right\},
\notag\\
&T_{\mathrm{ct}}(\bar{s}_k)
\triangleq
\inf\left\{
t>t_k:
t\in
\bigcup_{q=1}^{K_{RE}}
\{t_{RE,s}^{(q)},t_{RE,e}^{(q)}\}
\right\},
\notag\\
&T_{\mathrm{tx}}(\bar{s}_k)
\triangleq
\begin{cases}
t^{cmp,srv}(\bar{s}_k),
&
\begin{aligned}
&busy(\bar{s}_k)=1,\\[-0.5mm]
&t^{cmp,srv}(\bar{s}_k)>t_k,
\end{aligned}
\\
+\infty,
&\text{otherwise},
\end{cases}
\label{eq:candidate_event_times}\\
&T_{\mathrm{del}}(\bar{s}_k)
\triangleq
\min\left\{
\hat t:
(\hat t,\hat i,\hat g)\in\bar{\mathcal G}_k,\;
\hat t>t_k
\right\}.
\notag
\vspace{-2mm}
\end{align}

Here, $\bar{\mathcal B}_k$ and $\bar{\mathcal G}_k$ denote the buffered-bundle
and pending-delivery multisets in the post-decision state $\bar{s}_k$.
A minimum or infimum over an empty candidate set is defined as $+\infty$.
Let $\bar{x}_k\triangleq(\bar{s}_k,h_k)$ denote the complete post-decision state. The next relay-arrival time is a simulator-side random variable whose conditional law is generated by upstream evolution from $\bar{x}_k$, including subsequent source-generation and bundle-size draws. Neither hidden upstream information nor future random outcomes are exposed to the policy. Thus, $T_{\mathrm{arr},k}>t_k$ denotes the first unprocessed
relay arrival strictly after the co-timed event batch, with $T_{\mathrm{arr},k}=+\infty$ if no further arrival occurs.

{
An elective scheduling wait introduces an additional decision timer
\vspace{-1mm}
\begin{equation}
T_{\mathrm{wake}}(\bar{s}_k,q_k)
\triangleq
\begin{cases}
t_k+30~\mathrm{s}, & q_k=\mathrm{wait},\\
+\infty, & \text{otherwise}.
\end{cases}
\label{eq:elective_wait_wakeup}
\vspace{-1mm}
\end{equation}
A physical event occurring before this timer cancels the need to wait for the
full $30$~s because the earlier event itself creates the next decision
opportunity. Forced idle periods, for which no feasible transmission exists,
do not create this timer and remain purely event-driven.
} For every nonterminal epoch $t_k<T$, the next decision epoch is
\vspace{-2mm}
\begin{equation}
\begin{aligned}
t_{k+1}
\triangleq
\min\Bigl\{&
T_{\mathrm{arr},k},
T_{\mathrm{exp}}(\bar{s}_k),
T_{\mathrm{ct}}(\bar{s}_k),
T_{\mathrm{tx}}(\bar{s}_k),\\
&
T_{\mathrm{del}}(\bar{s}_k),
{T_{\mathrm{wake}}(\bar{s}_k,q_k),}
T
\Bigr\}.
\end{aligned}
\vspace{-1mm}
\label{eq:next_event_time}
\end{equation}

Because all events and control operations occurring at $t_k$ are exhausted
before this minimization, every nonterminal finite event candidate is strictly
larger than $t_k$. Hence,
\vspace{-4mm}
\begin{equation}
t_{k+1}>t_k,
\qquad
\delta_k\triangleq t_{k+1}-t_k>0.
\label{eq:positive_holding_time}
\vspace{-2mm}
\end{equation}
If several future physical events attain the same minimum, they are processed
as one batch according to
Subsection~\ref{subsec:event_types_ordering}. {At the terminal
time $T$, physical events occurring exactly at $T$ are processed before the
episode stops. The simulator then terminates without draining the system;
unexpired bundles still queued, in service, or in transit are recorded as
unresolved backlog rather than as drops or expirations.}

\vspace{-1mm}
\subsection{Event Updates, Transition Kernel, and Reward}
\label{subsec:transition_kernel}

At $t_{k+1}$, only state components affected by the occurring physical events
are updated. A relay arrival is admitted if capacity permits; otherwise the
existing buffer is retained and the arrival becomes $b_{k+1}^{new}$, so that
the first control stage at the resulting epoch is overflow resolution.
{If one eviction does not release sufficient capacity, additional
drop stages are taken at the same timestamp until the arrival is either
admitted or rejected. Scheduling is then reconsidered without advancing
physical time.}

{
When a relay-buffered bundle reaches its absolute expiry time, it is removed
from the relay buffer. Expiry accounting is maintained according to the
bundle's physical location in the simulator.

A queued bundle that can no longer be delivered before its
expiry is excluded from the scheduling set through
Equation~\eqref{eq:feasible_schedule_set}, but it is not proactively removed;
it remains in the relay buffer until it is explicitly dropped or reaches its
absolute expiry time.}

At transmission completion, the completed bundle is not yet treated as
delivered because it must still propagate to Earth. Before clearing the server,
the record
$\left(t_{k+1}+\tau^{srv}_{RE}(t_{k+1}^{-}), i(b^{srv}(t_{k+1}^{-})), g(b^{srv}(t_{k+1}^{-})) \right)$
is inserted into $\mathcal{G}$. Here,
$\tau^{srv}_{RE}(t_{k+1}^{-})$ is the propagation delay of the contact segment
on which the completed transmission was initiated. The server is then cleared
by setting $busy=0$, $b^{srv}=\emptyset$, $t^{cmp,srv}=+\infty$, and
$\tau^{srv}_{RE}=0$.

An Earth-delivery event removes the corresponding pending record and updates
the Earth-side timestamp according to
{Subsection~\ref{subsec:earth_side_aoi}.} Thus, transmission completion and
Earth-side freshness update remain distinct physical events. After all events
at $t_{k+1}$ are processed, the fairness deficits are updated using the holding
time $\delta_k$ and the delivered freshness gains $y_{i,k}$ according to
Section~\ref{sec:fairness_modeling}.

{
The post-decision map, event scheduler, source-generation processes, and
transmission dynamics induce the controlled transition kernel on the complete
environment state
$P(dx'|x_k,a_k)\triangleq
\Pr\{x_{k+1}\in dx'\mid x_k,a_k\}$.

The relay policy selects actions using only the observation constructed from
$s_k$ and does not observe $h_k$ or future random outcomes. Conditional on a
fixed physical traffic trace, the environment transition is deterministic.
Under stochastic traffic, randomness originates from the independent
source-generation processes and bundle-size draws; source FIFO service,
source--relay contacts, and propagation then determine the resulting
relay-arrival sequence. Because physical time is included in $x_k$, the
transition kernel also induces the holding time
$\delta_k=t_{k+1}-t_k$.
}

The reward accumulated during transition $k$ is the negative integrated
freshness cost
{$r_k\triangleq-\int_{t_k}^{t_{k+1}}G(t)\,dt$.}
Integration over physical time correctly weights intervals of different
duration. Drop and expiry counts are retained as constraint costs rather than
being penalized in the reward.

{
Because multiple overflow actions may occur at one physical epoch, define
$N_k^{\mathrm{drop}}\triangleq J_k$,
where each rejected arriving bundle or evicted buffered bundle counts as one
drop. Let $N_k^{\mathrm{exp}}$ denote the total number of bundles whose absolute
lifetime expires during transition $k$; the simulator additionally records
the physical location at which each expiry occurs.}
Their constraints are defined in
Section~\ref{sec:constrained_smdp_formulation}.
\vspace{-2mm}
\section{Fairness Modeling}
\label{sec:fairness_modeling}

Minimizing aggregate excess age does not guarantee balanced service across
heterogeneous sources. We therefore measure fairness through \emph{useful
delivered freshness}: service is credited only when an Earth delivery advances
the latest generation timestamp of that source.

\vspace{-2mm}

\subsection{Freshness-Service Deficit}

Because $u_i(t)$ changes only when a useful source-$i$ update reaches Earth,
we define the delivered freshness gain during transition $k$ as
\vspace{-1mm}
\begin{equation}
y_{i,k}
\triangleq
u_i(t_{k+1})-u_i(t_k)
\geq 0.
\label{eq:delivered_freshness_gain}
\end{equation}
The nonnegativity follows directly from
{Subsection~\ref{subsec:earth_side_aoi}}, since the max-update rule makes
$u_i(t)$ nondecreasing.
This quantity equals the sum of the positive generation-timestamp advances
produced by all source-$i$ Earth deliveries during $(t_k,t_{k+1}]$; obsolete
or out-of-order deliveries contribute zero through
{Subsection~\ref{subsec:earth_side_aoi}.}
Hence, $y_{i,k}$ measures useful freshness progress rather
than the number or temporal regularity of delivered bundles.

Let $\rho_i>0$ denote the required long-run freshness-service rate of
source $i$, measured as delivered generation-time advance per unit physical
time. Thus, $\rho_i$ specifies a target rate of cumulative
freshness progress, not a required bundle-delivery frequency or a bound on
inter-delivery time.
Because transition $k$ lasts $\delta_k$, its fairness cost is
\begin{equation}
c_k^{(\mathrm{fair},i)}
\triangleq
\rho_i\delta_k-y_{i,k},
\label{eq:fairness_constraint_cost}
\end{equation}
where positive values indicate service shortfall. Accumulated under-service is
tracked by the nonnegative deficit
\begin{equation}
D_i(k+1)\triangleq
\left[D_i(k)+c_k^{(\mathrm{fair},i)}\right]^+ .
\label{eq:deficit_update}
\vspace{-2mm}
\end{equation}
By including $D_i(k)$ in the relay-side control state $s_k$, the policy can
condition scheduling and dropping decisions on accumulated source-specific
freshness-service shortfall.

The targets $\rho_i$ should be compatible with the achievable long-run
freshness-service rates under the contact and traffic dynamics. Excess service
can reduce an existing deficit, while the projection in
Equation~\eqref{eq:deficit_update} prevents surplus service from accumulating
as negative credit. Hence, $D_i(k)$ tracks uncompensated freshness-service
shortfall rather than imposing a bound on inter-delivery gaps.
\vspace{-2mm}
\subsection{Long-Run Fairness Interpretation}
Equation~\eqref{eq:deficit_update} is a virtual-queue recursion with demand
$\rho_i\delta_k$ and service $y_{i,k}$.
\vspace{-2mm}
\begin{lemma}[Deficit rate stability implies freshness fairness]
\label{lem:deficit_stability_fairness}
Let $\tilde{t}_K\triangleq\sum_{k=0}^{K-1}\delta_k=t_K-t_0$. If
$\tilde{t}_K\to\infty$ and
$D_i(K)/\tilde{t}_K\to0$, then
\vspace{-2mm}
\[
\liminf_{K\to\infty}
\frac{1}{\tilde{t}_K}
\sum_{k=0}^{K-1}y_{i,k}
\geq \rho_i .
\]
\end{lemma}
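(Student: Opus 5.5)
The argument is the standard virtual-queue one. Since $[x]^+\geq x$, the deficit update \eqref{eq:deficit_update} gives, for every $k$,
\[
D_i(k+1)\geq D_i(k)+\rho_i\delta_k-y_{i,k}.
\]
We sum this from $k=0$ to $K-1$. The left-hand sides telescope, which yields
\[
D_i(K)-D_i(0)\geq \rho_i\sum_{k=0}^{K-1}\delta_k-\sum_{k=0}^{K-1}y_{i,k}
=\rho_i\tilde t_K-\sum_{k=0}^{K-1}y_{i,k}.
\]
Rearranging and using $D_i(0)\geq 0$ (the deficit is nonnegative by construction, typically initialized at zero), we get the pathwise bound
\[
\frac{1}{\tilde t_K}\sum_{k=0}^{K-1}y_{i,k}\;\geq\;\rho_i-\frac{D_i(K)}{\tilde t_K}+\frac{D_i(0)}{\tilde t_K}\;\geq\;\rho_i-\frac{D_i(K)}{\tilde t_K}.
\]
This holds for every $K\geq 1$ with $\tilde t_K>0$, which is guaranteed by \eqref{eq:positive_holding_time}.

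We then take $\liminf_{K\to\infty}$ of both sides. The hypothesis $D_i(K)/\tilde t_K\to 0$ makes the right-hand side converge to $\rho_i$, so the claim follows. The hypothesis $\tilde t_K\to\infty$ is only needed to make the normalization meaningful; it also makes any finite initial deficit $D_i(0)$ negligible, in case one prefers not to drop that term. The statement is pathwise, so no expectation is involved. If an almost-sure or in-expectation version were wanted, the same inequality could be applied sample-path-wise, or after taking expectations with linearity.

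There is no real obstacle. The only points to check are the direction of the projection inequality $[x]^+\geq x$ and that $D_i(0)$ is finite and nonnegative, so that its contribution is either dropped with the correct sign or vanishes after dividing by $\tilde t_K$.
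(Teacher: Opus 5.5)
Your proposal is correct and takes the same approach as the paper: apply $[x]^+\geq x$ to the deficit recursion, telescope to get $\sum_{k=0}^{K-1}y_{i,k}\geq\rho_i\tilde t_K-D_i(K)+D_i(0)$, divide by $\tilde t_K$, and use deficit rate stability. Your remarks on handling $D_i(0)/\tilde t_K$ (drop it by sign, or note that it vanishes) and on $\tilde t_K>0$ make explicit details that the paper's one-line proof leaves implicit.
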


\begin{proof}
Since $[x]^+\geq x$, summing Equation~\eqref{eq:deficit_update} gives
$\sum_{k=0}^{K-1}y_{i,k}
\geq\rho_i \tilde{t}_K-D_i(K)+D_i(0)$; division by $\tilde{t}_K$ and deficit rate
stability yield the result.
\end{proof}
\vspace{-2mm}
For finite-horizon learning, $D_i(k)$ serves as a state variable that
exposes accumulated freshness-service shortfall to the policy. Separately,
$c_k^{(\mathrm{fair},i)}$ is used as the signed episodic constraint cost in
Section~\ref{sec:constrained_smdp_formulation}.
The lemma gives a conditional long-run guarantee on the average delivered
freshness-progress rate, provided the deficit is rate stable. It does not
guarantee bounded inter-delivery times, bounded AoI, or service within every
finite time window. The mechanism targets long-run freshness-service balance rather than fairness over every finite time window.
\vspace{-2.3mm}
\section{Constrained SMDP Problem Formulation}
\label{sec:constrained_smdp_formulation}

Using the event-driven SMDP defined in
Section~\ref{sec:event_driven_smdp}, we optimize destination-side freshness
subject to operational and source-specific fairness constraints.
\vspace{-2.1mm}
\subsection{Returns and Constraint Costs}

Let $\pi_\theta$ denote the stochastic relay policy applied at each
applicable control stage using the policy observation constructed from the
relay-side control state. For physical transition $k$, the complete same-time
control sequence is
$a_k\triangleq(\mathbf d_k,q_k)$, with
$\mathbf d_k=(d_{k,0},\ldots,d_{k,J_k-1})$,
as defined in Section~\ref{sec:event_driven_smdp}. The drop-stage decisions are
taken sequentially on their intermediate states, followed by the optional
scheduling decision $q_k$, while all such same-time stages belong to one
physical SMDP transition.

The expected finite-horizon reward and constraint returns are
\vspace{-1.7mm}
\begin{equation}
\begin{aligned}
J_R(\theta)
&\triangleq
\mathbb{E}_{\mu_0,\pi_\theta,P}
\left[
\sum_{k:t_k<T} r_k
\right],\\
J_{C_j}(\theta)
&\triangleq
\mathbb{E}_{\mu_0,\pi_\theta,P}
\left[
\sum_{k:t_k<T} c_k^{(j)}
\right],
\qquad j\in\mathcal{J},
\end{aligned}
\label{eq:reward_constraint_returns}
\vspace{-1mm}
\end{equation}
where $\mu_0$ denotes the distribution of the initial complete environment
state, and the reward $r_k$ and transition kernel $P$ are defined in
Subsection~\ref{subsec:transition_kernel}.

For relay management, the constraint index set and operational costs are
\begin{equation}
\begin{aligned}
\mathcal{J}
&\triangleq
\{\mathrm{drop},\mathrm{exp}\}
\cup
\{(\mathrm{fair},i):i\in\mathcal{I}\},\\
c_k^{(\mathrm{drop})}
&\triangleq
N_k^{\mathrm{drop}},
\qquad
c_k^{(\mathrm{exp})}
\triangleq
N_k^{\mathrm{exp}},
\end{aligned}
\label{eq:constraint_costs}
\vspace{-1.4mm}
\end{equation}
The fairness cost $c_k^{(\mathrm{fair},i)}$ is defined in
Equation~\eqref{eq:fairness_constraint_cost}, while
$N_k^{\mathrm{drop}}$ and $N_k^{\mathrm{exp}}$ are defined in
Subsection~\ref{subsec:transition_kernel}.
Each constraint $j\in\mathcal{J}$ has an episodic budget $d_j$.

Let $K_T$ denote the number of SMDP transitions in an episode ending at $T$,
so $t_{K_T}=T$. The undiscounted source-$i$ fairness cost then satisfies
\begin{equation}
\sum_{k=0}^{K_T-1}c_k^{(\mathrm{fair},i)}
=
\rho_i(T-t_0)
-
\left[u_i(T)-u_i(t_0)\right],
\label{eq:undiscounted_fairness_return}
\vspace{-2mm}
\end{equation}
where the equality follows from
$\sum_{k=0}^{K_T-1}\delta_k=T-t_0$ and the telescoping definition of
$y_{i,k}$ in Equation~\eqref{eq:delivered_freshness_gain}.

Thus, the episodic source-$i$ constraint regulates the total delivered
freshness progress over the episode. Because $c_k^{(\mathrm{fair},i)}$ is signed, excess freshness service during
some transitions can compensate for shortfall during others in the episodic
constraint return. This differs from the projected deficit recursion in
Equation~\eqref{eq:deficit_update}, which does not retain surplus service as
negative credit. This finite-horizon constraint is distinct from the conditional
long-run freshness-service guarantee in
Lemma~\ref{lem:deficit_stability_fairness}, which additionally requires
deficit rate stability.
\vspace{-2mm}
\subsection{Constrained Objective and Lagrangian}
The relay-control problem is
\begin{equation}
\begin{aligned}
\max_{\theta}\quad
&J_R(\theta)\\
\mathrm{s.t.}\quad
&J_{C_j}(\theta)\leq d_j,
\qquad j\in\mathcal{J}.
\end{aligned}
\label{eq:constrained_optimization_problem}
\vspace{-1mm}
\end{equation}
Introducing a nonnegative multiplier $\lambda_j\geq0$ for each constraint,
we form the Lagrangian
\begin{equation}
\mathcal{L}(\theta,\boldsymbol{\lambda})
\triangleq
J_R(\theta)
-
\sum_{j\in\mathcal{J}}
\lambda_j
\left(
J_{C_j}(\theta)-d_j
\right),
\label{eq:lagrangian}
\vspace{-1.2mm}
\end{equation}
where $\boldsymbol{\lambda}\triangleq(\lambda_j)_{j\in\mathcal{J}}$ and the primal--dual problem is
\vspace{-1.2mm}
\begin{equation}
\max_{\theta}\;
\min_{\boldsymbol{\lambda}\succeq0}
\mathcal{L}(\theta,\boldsymbol{\lambda}).
\label{eq:primal_dual_problem}
\vspace{-1.2mm}
\end{equation}
Section~\ref{sec:masked_lagrangian_ppo} implements this formulation using
masked Lagrangian PPO with projected multiplier updates.

\section{Method: Masked Lagrangian PPO with Variable-Cardinality Set Pooling}
\label{sec:masked_lagrangian_ppo}

We develop a masked Lagrangian PPO controller ~\cite{schulman2017ppo} for the constrained event-driven relay-control problem. The method combines stage-specific
physical-feasibility masking, a causal observation adapter for
variable-cardinality bundle candidates, permutation-invariant set pooling,
holding-time-aware advantage estimation, and Lagrangian constraint updates.
Overflow resolution may require multiple same-time drop-stage decisions
followed by an optional scheduling decision, while PPO optimization is
performed at the level of physical SMDP transitions.
Fig.~\ref{fig:masked_lagrangian_ppo} summarizes the controller.


\begin{figure*}[t]
\centering
\includegraphics[width=1\textwidth]{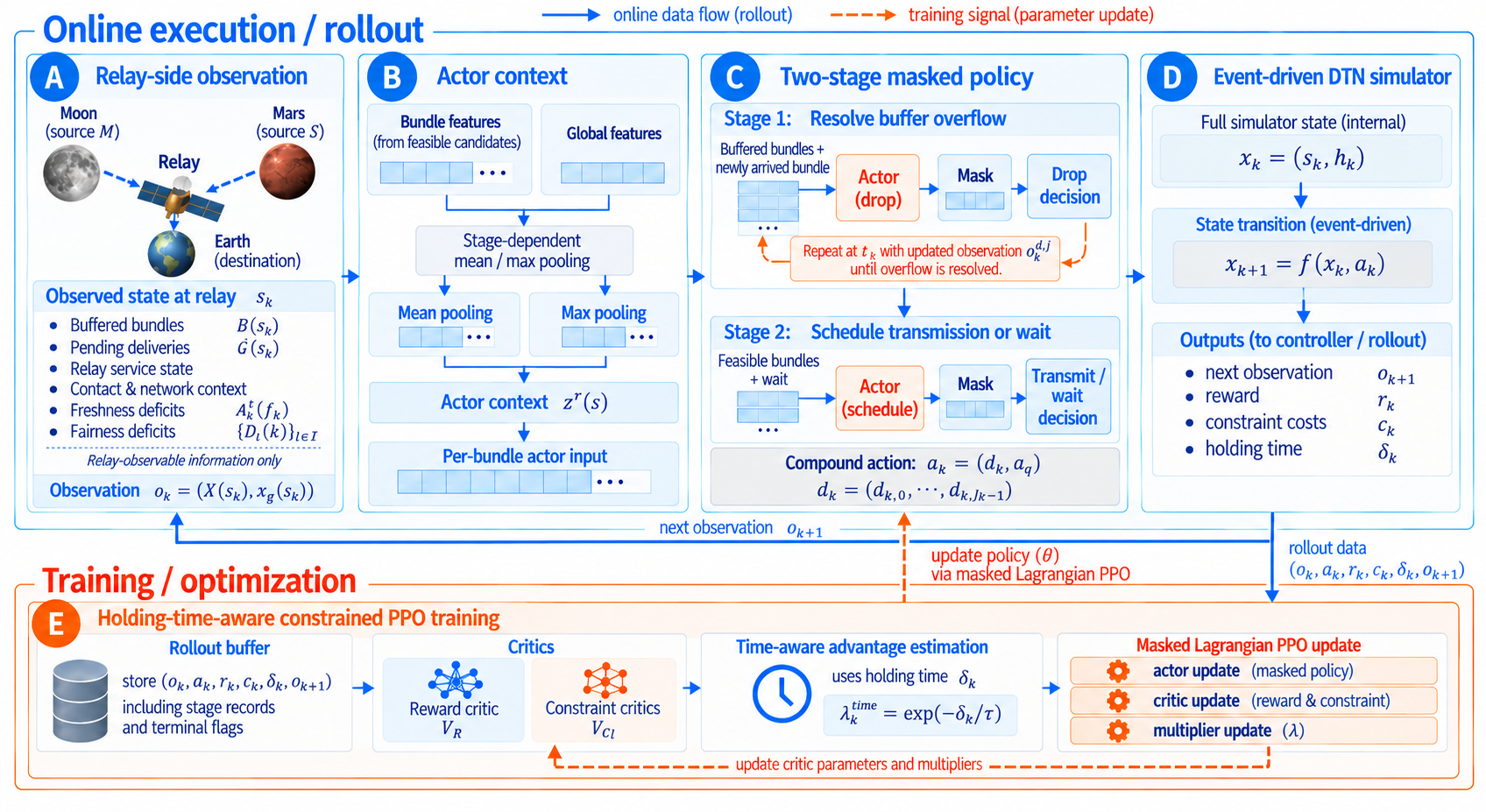}
\caption{Masked Lagrangian PPO architecture for event-driven relay control.}
\label{fig:masked_lagrangian_ppo}
\vspace{-2mm}
\end{figure*}

\vspace{-2mm}
\subsection{Stage-Specific Feasibility Masking}
\label{subsec:action_masking}

Let $s$ denote the relay-side control state presented to an individual
control stage. The nominal candidate sets are
\begin{equation}
\begin{aligned}
\widetilde{\mathcal A}^{\mathrm{sched}}(s)
&\triangleq
\mathcal B(s)\cup\{\mathrm{wait}\},
\\
\widetilde{\mathcal A}^{\mathrm{drop}}(s)
&\triangleq
\mathcal B(s)\cup\{b^{\mathrm{new}}(s)\}.
\end{aligned}
\label{eq:nominal_action_sets}
\vspace{-1mm}
\end{equation}
The scheduling slate contains the currently buffered bundles together with
the elective $\mathrm{wait}$ action, whereas the drop slate contains the
currently buffered bundles together with the pending arrival. The drop stage
is active only when the pending arrival requires overflow resolution.
For stage $r\in\{\mathrm{drop},\mathrm{sched}\}$, define
\begin{equation}
\chi^r(u;s)
\triangleq
\mathbf 1\!\left\{
u\in\mathcal A^r(s)
\right\},
\qquad
u\in\widetilde{\mathcal A}^{r}(s),
\label{eq:feasibility_mask}
\vspace{-1mm}
\end{equation}
where $\mathcal A^r(s)$ is the corresponding feasible action set defined in
Section~\ref{sec:event_driven_smdp}.

During overflow resolution, every represented bundle in the current drop
slate is eligible for rejection or eviction, while the special wait action is
not available. After each selected rejection or eviction, the relay state,
candidate set, observation, and mask are recomputed on the resulting
intermediate state $s_k^{d,j}$. If the incoming bundle remains pending and
the capacity violation persists, another masked drop decision is taken at
the same physical time. Once the arrival is admitted or rejected, scheduling
is reconsidered on the resulting state $s_k^d$. No physical time elapses
between these same-time control stages. At a scheduling stage, the $\mathrm{wait}$ action is available and a bundle
is enabled only if relay--Earth service is available, its complete
serialization fits within the active contact, and its Earth reception occurs
strictly before its absolute expiry, per
Equation~\eqref{eq:feasible_schedule_set}. If no scheduling stage is enabled,
the relay is forced idle and no scheduling action is sampled. The same
physical masks are used during training and execution; deterministic
evaluation selects the masked argmax.

\subsection{Causal Observation \& Set Pooling}
\label{subsec:bundle_token_features}
\label{subsec:observation_set_pooling}

The policy does not observe the hidden upstream state $h_k$ introduced in
Section~\ref{sec:event_driven_smdp}. Instead, a deterministic adapter
constructs a causal numerical observation from the relay-side control state
$s$. For a control stage with $n(s)$ represented bundle candidates,
\vspace{-0.5mm}
\begin{equation}
o(s)
\triangleq
\left(
X(s),x_g(s)
\right), X(s)\in\mathbb R^{n(s)\times18}, x_g(s)\in\mathbb R^{50},
\label{eq:policy_observation}
\end{equation}

where each row of $X(s)$ is the 18-dimensional representation of one
represented bundle and $x_g(s)$ is the 50-dimensional raw global context.
Stable bundle identifiers are retained for action decoding but are not part
of these numerical feature tensors. The same feature definitions and fixed
normalization rules are used during training, validation, and testing.

Each bundle is represented by 18 causal features grouped into four categories:
source and size attributes, freshness relevance, expiry/contact feasibility,
and buffer-management context. These features include bundle age and remaining
lifetime, potential freshness gain, transmission and deadline slack, overflow
pressure, and contact-opportunity indicators.
The 50-dimensional global vector is naturally partitioned as
\begin{equation}
x_g(s)
=
\left[
x_{\mathrm{core}}(s),
x_{\mathrm{srv}}(s),
x_{\mathrm{pend}}(s),
x_M^{\mathrm{src}}(s),
x_S^{\mathrm{src}}(s)
\right],
\label{eq:global_feature_blocks}
\end{equation}
with dimensions $17+6+5+11+11=50$.
The 50-dimensional global context summarizes five information groups:
the current relay/contact state, active service, pending Earth delivery,
and source-specific recent ingress/contact history for the Moon and Mars.
All components are constructed from information available at the relay and
the known contact plan.
Thus, realized future source arrivals, hidden source queues, future random
bundle-size realizations, RNG state, traffic seeds, and trace identifiers are
not exposed to the policy.

To obtain a fixed-dimensional representation from the variable-cardinality
bundle set, the network forms masked mean and elementwise-maximum summaries
of the active bundle rows. Let
\begin{equation}
\mathcal I^r(s)
\triangleq
\left\{
u\in
\widetilde{\mathcal A}^{r}(s)\setminus\{\mathrm{wait}\}
:
\chi^r(u;s)=1
\right\}
\label{eq:pooling_support}
\vspace{-1mm}
\end{equation}
denote the active bundle support for control stage $r$. Define
\begin{equation}
\begin{aligned}
\bar x^r(s)
&=
\operatorname{MeanPool}
\left(
\{x_u(s):u\in\mathcal I^r(s)\}
\right),\\
x_{\max}^r(s)
&=
\operatorname{MaxPool}
\left(
\{x_u(s):u\in\mathcal I^r(s)\}
\right).
\end{aligned}
\label{eq:set_pooling}
\end{equation}
The raw 50-dimensional global context is then concatenated with the two
18-dimensional pooled summaries:
\begin{equation}
z^r(s)
\triangleq
\left[
x_g(s)^{\mathsf T},
\bar x^r(s)^{\mathsf T},
x_{\max}^r(s)^{\mathsf T}
\right]^{\mathsf T}
\in
\mathbb R^{86}.
\label{eq:pooled_context}
\end{equation}
Hence, the 50-dimensional raw global observation $x_g(s)$ and the
86-dimensional pooled network context $z^r(s)$ are distinct quantities. For value estimation, we use the same mean--max pooling operation over all
bundles represented in the relay-side control state, irrespective of their
immediate action feasibility. Define
\vspace{-3mm}
\begin{equation}
\mathcal I^V(s)
\triangleq
\mathcal B(s)
\cup
\{b^{\mathrm{new}}(s):b^{\mathrm{new}}(s)\neq\emptyset\},
\vspace{-3mm}
\end{equation}
and
\begin{equation}
\begin{aligned}[t]
&z^V(s)
\triangleq
\Big[
x_g(s)^{\mathsf T},\\
&\quad\operatorname{MeanPool}
\{x_u(s):u\in\mathcal I^V(s)\}^{\mathsf T},\\
&\quad\operatorname{MaxPool}
\{x_u(s):u\in\mathcal I^V(s)\}^{\mathsf T}
\Big]^{\mathsf T}
\in\mathbb R^{86}.
\end{aligned}
\label{eq:critic_context}
\vspace{-2mm}
\end{equation}
Empty bundle sets use zero pooled summaries. For a represented bundle action $u$, the corresponding bundle row is
concatenated with the pooled context,
\begin{equation}
\widetilde x_u^r(s)
\triangleq
\left[
x_u(s)^{\mathsf T},
z^r(s)^{\mathsf T}
\right]^{\mathsf T}
\in
\mathbb R^{104}.
\label{eq:candidate_scorer_input}
\vspace{-2mm}
\end{equation}
Separate candidate-scoring networks are used for drop and scheduling stages.
Thus, for a represented bundle,
\begin{equation}
\ell_\theta^r(u;o(s))
=
f_\theta^r
\left(
\widetilde x_u^r(s)
\right),~r\in\{\mathrm{drop},\mathrm{sched}\},
\label{eq:bundle_action_logit}
\vspace{-1mm}
\end{equation}
whereas the elective wait score at a scheduling stage is obtained directly
from the pooled context,
\begin{equation}
\ell_\theta^{\mathrm{sched}}
(\mathrm{wait};o(s))
=
f_\theta^{\mathrm{wait}}
\left(
z^{\mathrm{sched}}(s)
\right).
\label{eq:wait_action_logit}
\vspace{-1mm}
\end{equation}
Feasibility masking is applied to the logits before normalization:
\vspace{-1mm}
\begin{gather}
\pi_\theta^r(u|o(s))
=
\frac{
\exp\!\left(\ell_\theta^r(u;o(s))\right)
\chi^r(u;s)
}{
\displaystyle
\sum_{u'\in\widetilde{\mathcal A}^{r}(s)}
\exp\!\left(\ell_\theta^r(u';o(s))\right)
\chi^r(u';s)
},
\raisetag{12pt}
\label{eq:stage_masked_policy}
\vspace{-3mm}
\end{gather}
for $r\in\{\mathrm{drop},\mathrm{sched}\}$. Consequently, infeasible nominal
actions receive zero probability and contribute neither to action selection
nor to the stage entropy.

\subsection{Compound Transition Policy and Value Estimates}
\label{subsec:compound_policy}
Let
$o_k^{d,j}=o(s_k^{d,j})$ denote the observation at drop stage $j$, and let
$o_k^q=o(s_k^d)$ denote the post-overflow scheduling observation. Define
$\mathbb I_k^q=1$ when a scheduling stage is active and
$\mathbb I_k^q=0$ otherwise.

The stage-wise policies induce the following probability for the complete
physical-transition action
$a_k=(\mathbf d_k,q_k)$ conditional on the initial relay-side state $s_k$:
\begin{equation}
\begin{aligned}
\pi_\theta(a_k|s_k)
\triangleq{}&
\prod_{j=0}^{J_k-1}
\pi_\theta^{\mathrm{drop}}
\!\left(
d_{k,j}\mid o_k^{d,j}
\right)
\\
&\times
\left[
\pi_\theta^{\mathrm{sched}}
\!\left(
q_k\mid o_k^q
\right)
\right]^{\mathbb I_k^q}.
\end{aligned}
\label{eq:masked_policy}
\end{equation}
An empty drop product equals one. Accordingly,
\vspace{-2mm}
\begin{equation}
\begin{aligned}
\log\pi_\theta(a_k|s_k)
={}&
\sum_{j=0}^{J_k-1}
\log
\pi_\theta^{\mathrm{drop}}
\!\left(
d_{k,j}\mid o_k^{d,j}
\right)
\\
&+
\mathbb I_k^q
\log
\pi_\theta^{\mathrm{sched}}
\!\left(
q_k\mid o_k^q
\right).
\end{aligned}
\label{eq:compound_log_probability}
\end{equation}

Equation~\eqref{eq:masked_policy} subsumes all control cases. If no overflow
resolution is required, the empty drop product equals one. If scheduling is
enabled, the scheduling probability is multiplied by the probabilities of
all preceding same-time drop decisions; otherwise only the drop-stage
probabilities contribute. If neither a drop nor a scheduling decision is
required, the complete action probability equals one. Thus, all same-time
control stages belong to one physical-transition policy probability.

The intermediate drop-stage states
$s_k^{d,0},\ldots,s_k^{d,J_k}$ and their policy observations
are used only for same-time conditional action selection; they do not
constitute separate physical SMDP transitions, holding-time intervals,
rewards, constraint costs, or critic time steps.

At physical transition boundary $k$, define
$
z_k^V\triangleq z^V(s_k).
$
The critic therefore remains well defined whether the current epoch contains
a drop stage, a scheduling stage, or no active control stage. It produces one
reward value and one value for each constraint:
\begin{equation}
V_R(z_k^V),
\qquad
V_{C_j}(z_k^V),
\quad j\in\mathcal J.
\label{eq:reward_cost_value_functions}
\vspace{-1mm}
\end{equation}
Critic evaluation occurs only at physical SMDP transition boundaries, not at
intermediate same-time drop stages.
\vspace{-1mm}
\subsection{Holding-Time-Aware Lagrangian PPO}
 To account separately for irregular
holding times in temporal credit assignment, define
\begin{equation}
\lambda_k^{\mathrm{time}}
\triangleq
\exp\!\left(
-\frac{\delta_k}{\tau_{\mathrm{GAE}}}
\right),
\label{eq:time_gae_decay}
\end{equation}
where $\tau_{\mathrm{GAE}}>0$ is the physical-time scale used by the
advantage estimator.

For numerical conditioning, rewards and constraint costs are rescaled as
\begin{equation}
\widetilde r_k=\frac{r_k}{s_R},
\qquad
\widetilde c_k^{(j)}=\frac{c_k^{(j)}}{s_j},
\label{eq:optimizer_scaling}
\vspace{-1mm}
\end{equation}
where $s_R>0$ and $s_j>0$ are fixed optimization scales.
These rescalings affect only numerical conditioning and not the physical
objective or constraint definitions.

The reward and constraint TD residuals are
\begin{equation}
\begin{aligned}
\zeta_k^R
&\triangleq
\widetilde r_k
+
V_R(z_{k+1}^V)
-
V_R(z_k^V),\\
\zeta_k^{C_j}
&\triangleq
\widetilde c_k^{(j)}
+
V_{C_j}(z_{k+1}^V)
-
V_{C_j}(z_k^V).
\end{aligned}
\vspace{-1mm}
\end{equation}
The holding-time-aware GAE recursions
\cite{schulman2015gae} are
\begin{equation}
\begin{aligned}
\hat A_k^R
&=
\zeta_k^R
+
\lambda_k^{\mathrm{time}}
\hat A_{k+1}^R,
\\
\hat A_k^{C_j}
&=
\zeta_k^{C_j}
+
\lambda_k^{\mathrm{time}}
\hat A_{k+1}^{C_j}.
\end{aligned}
\label{eq:reward_cost_gae}
\end{equation}
Terminal transitions use zero bootstrap values. If a rollout segment is
truncated before termination, its final critic estimates may instead be used
for bootstrap.

The episodic reward and constraint objectives remain undiscounted;
$\lambda_k^{\mathrm{time}}$ affects only temporal credit assignment in GAE.

Let $\boldsymbol{\lambda}$ denote the current Lagrange multipliers. The
Lagrangian advantage is
\begin{equation}
\hat A_k^L
\triangleq
\hat A_k^R
-
\sum_{j\in\mathcal J}
\lambda_j
\hat A_k^{C_j}.
\label{eq:lagrangian_advantage}
\end{equation}
Before the PPO actor update, $\hat A_k^L$ is standardized over transitions
with an active control stage; the individual reward and constraint advantages
are not separately standardized. For notational simplicity, we retain
$\hat A_k^L$ for the standardized quantity thereafter.
Let $\theta_{\mathrm{old}}$ denote the behavior policy that generated the
rollout. The PPO importance ratio for physical transition $k$ is
\vspace{-1mm}
\begin{equation}
\xi_k(\theta)
\triangleq
\frac{
\pi_\theta(a_k|s_k)
}{
\pi_{\theta_{\mathrm{old}}}(a_k|s_k)
}.
\label{eq:ppo_ratio}
\vspace{-1mm}
\end{equation}
In implementation, the behavior log-probability is stored when the transition
is collected, and the numerator is reconstructed during optimization by
summing the current log-probabilities of the same active control stages in
Equation~\eqref{eq:compound_log_probability}. Stage observations, stable
candidate identities, nominal action slates, and feasibility masks are stored
with the rollout so that the behavior and current policies are evaluated on
identical action supports.

The clipped PPO actor objective is
\begin{equation}
\begin{aligned}
&J_{\mathrm{actor}}(\theta)
=
\mathbb E\Big[
\min\Big(
\xi_k(\theta)\hat A_k^L,
\\
&\quad\mathrm{clip}
\big(
\xi_k(\theta),
1-\epsilon_{\mathrm{ppo}},
1+\epsilon_{\mathrm{ppo}}
\big)
\hat A_k^L
\Big)
\Big]\\
&\quad+
\eta_H\mathbb E[\mathcal H_k],
\end{aligned}
\label{eq:ppo_actor_objective}
\end{equation}
where $\eta_H\geq0$ is the entropy coefficient. For physical transition $k$,
the entropy contribution is the sum of the entropies of all active same-time
control stages:
\begin{equation}
\mathcal H_k
=
\sum_{j=0}^{J_k-1}
\mathcal H^{\mathrm{drop}}(o_k^{d,j})
+
\mathbb I_k^q
\mathcal H^{\mathrm{sched}}(o_k^q),
\label{eq:transition_entropy}
\vspace{-3mm}
\end{equation}
with
\begin{equation}
\mathcal H^r(o(s))
=
-
\sum_{u\in\mathcal A^r(s)}
\pi_\theta^r(u|o(s))
\log
\pi_\theta^r(u|o(s)).
\label{eq:masked_stage_entropy}
\end{equation}
Thus, infeasible nominal actions contribute neither probability mass nor
entropy.

Using critic targets
\begin{equation}
\begin{aligned}
\hat R_k
&=
\hat A_k^R
+
V_R^{\mathrm{old}}(z_k^V),\\
\hat C_{j,k}
&=
\hat A_k^{C_j}
+
V_{C_j}^{\mathrm{old}}(z_k^V),
\end{aligned}
\vspace{-1mm}
\label{eq:critic_targets}
\end{equation}
the joint loss for the five critic outputs is
\vspace{-1mm}
\begin{equation}
\begin{aligned}
\mathcal L_{\mathrm{critic}}
&=
\mathbb E
\Big[
\big(
V_R(z_k^V)-\hat R_k
\big)^2\\
&\quad+
\sum_{j\in\mathcal J}
\big(
V_{C_j}(z_k^V)-\hat C_{j,k}
\big)^2
\Big].
\end{aligned}
\vspace{-1mm}
\label{eq:critic_loss}
\end{equation}
Actor and critic parameters are updated from the same physical-transition
minibatches using their respective optimization losses and learning-rate
schedules. Numerical optimizer settings are specified with the experimental
training configuration rather than being part of the constrained SMDP
definition.
\vspace{-2mm}
\subsection{Lagrange-Multiplier Update}

Multiplier updates use completed episodic constraint returns. Let
$\mathcal E_{\mathrm{comp}}$ denote the set of episodes completed since the
previous multiplier update. The empirical return of constraint $j$ is
\begin{equation}
\widehat J_{C_j}
\triangleq
\frac{1}{|\mathcal E_{\mathrm{comp}}|}
\sum_{e\in\mathcal E_{\mathrm{comp}}}
\sum_{k=0}^{K_e-1}
c_{e,k}^{(j)},
\label{eq:empirical_constraint_return}
\vspace{-1mm}
\end{equation}
The projected dual update is
\begin{equation}
\lambda_j
\leftarrow
\left[
\lambda_j
+
\eta_\lambda
\frac{
\widehat J_{C_j}-d_j
}{
s_j
}
\right]^+,
\qquad
j\in\mathcal J,
\label{eq:lagrange_multiplier_update}
\vspace{-1mm}
\end{equation}
where $\eta_\lambda>0$ is the dual step size. Thus, an episodic return above
its budget increases the corresponding multiplier, while a return below its
budget permits the multiplier to decrease toward zero. The normalization by $s_j$ affects only the numerical dual update and does
not change the physical constraint $J_{C_j}\leq d_j$.
\vspace{-1mm}
\subsection{Training Procedure}
\label{subsec:training_procedure}

Each rollout record corresponds to one physical SMDP transition and stores
\begin{equation}
\left(
o_k,
a_k,
r_k,
\{c_k^{(j)}\}_{j\in\mathcal J},
\delta_k,
o_{k+1},
\mathcal C_k,
\iota_k^{\mathrm{term}}
\right),
\label{eq:rollout_record}
\end{equation}
where $\mathcal C_k$ contains the same-time control-stage observations,
stable candidate identities, nominal action slates, feasibility masks,
selected actions, and behavior log-probabilities required to reconstruct the
compound action probability. The indicator
$\iota_k^{\mathrm{term}}\in\{0,1\}$ identifies terminal physical
transitions. The pooled contexts $z_k^V$ and $z_{k+1}^V$ are deterministic
functions of the stored observations and masks and are reconstructed by the
network when value estimates are evaluated.

After rollout collection, the holding-time factors, reward and constraint
advantages, critic targets, and Lagrangian advantages are computed from the
physical-transition sequence. The quantities used as PPO targets, together
with the stored behavior log-probabilities, remain fixed during the PPO
epochs for that rollout. Current policy probabilities, masked entropies, and
value estimates are re-evaluated on each optimization minibatch.

The actor and critics are trained from random initialization using PPO. Stage-specific
action masks are applied during both training and execution. Multiplier
updates use completed episodic returns and are performed after the PPO updates
for the corresponding rollout, so the updated multipliers affect subsequent
data collection rather than changing advantages already computed for the
current rollout. Algorithm~\ref{alg:masked_lagrangian_ppo} summarizes the procedure.

\begin{algorithm}[t]
\footnotesize
\caption{Masked Lagrangian PPO for Event-Driven Relay Control}
\label{alg:masked_lagrangian_ppo}
\begin{algorithmic}[1]

\REQUIRE
$T,\tau_{\mathrm{GAE}},\epsilon_{\mathrm{ppo}},
\eta_H,\eta_\lambda,\{d_j\}_{j\in\mathcal J},K_{\mathrm{PPO}}$

\STATE Initialize actor, critics, and $\lambda_j\geq0$

\WHILE{training criterion not satisfied}

    \STATE Initialize rollout buffer $\mathcal R$ and
    completed-episode set $\mathcal E_{\mathrm{comp}}$

    \WHILE{rollout collection is incomplete}

        \STATE Observe $s_k$ and construct $o_k$;
        initialize $\mathbf d_k\leftarrow()$

        \WHILE{overflow resolution is required}
            \STATE Construct masked drop policy at $s_k^{d,j}$
            and sample $d_{k,j}$
            \STATE Apply $d_{k,j}$ and append it to $\mathbf d_k$
        \ENDWHILE

        \IF{scheduling is enabled}
            \STATE Construct masked scheduling policy at $s_k^d$
            and sample $q_k$
        \ELSE
            \STATE $q_k\leftarrow\emptyset$
        \ENDIF

        \STATE Set $a_k\leftarrow(\mathbf d_k,q_k)$ and advance DTNSim to $t_{k+1}$

        \STATE Observe
        $(r_k,\{c_k^{(j)}\}_{j\in\mathcal J},\delta_k,o_{k+1})$
        and store the transition and active stage records in $\mathcal R$

        \STATE Accumulate completed episodic constraint returns

    \ENDWHILE

    \STATE Compute
    $\lambda_k^{\mathrm{time}}
    =\exp(-\delta_k/\tau_{\mathrm{GAE}})$,
    reward/constraint GAE, critic targets, and $\hat A_k^L$

    \FOR{$p=1,\ldots,K_{\mathrm{PPO}}$}
        \FOR{each shuffled minibatch $\mathcal M\subset\mathcal R$}
            \STATE Re-evaluate masked stage probabilities and form PPO ratios
            \STATE Update actor and critic parameters
        \ENDFOR
    \ENDFOR

    \IF{$\mathcal E_{\mathrm{comp}}\neq\emptyset$}
        \STATE Estimate $\widehat J_{C_j}$ and update $\lambda_j$
    \ENDIF
\ENDWHILE
\end{algorithmic}
\end{algorithm}
\section{DTNSim Implementation and Simulator Interface}
\label{sec:dtnsim_implementation}

We extend DTNSim~\cite{msancheznet2019dtnsim} to implement the
event-driven relay environment of Section~\ref{sec:event_driven_smdp}
and the masked control interface of
Section~\ref{sec:masked_lagrangian_ppo}. The implementation preserves the
shared-buffer, non-preemptive service, delayed-reception, and co-timed event
semantics defined by the model.

\subsection{Runtime Architecture and Environment Contract}
DTNSim owns the simulation clock, source generation and source queues,
source--relay transmission and propagation, contact plan, bundle movement,
and physical event calendar. A relay-control adapter exposes the relay-side
state, constructs the causal observation and stage-specific masks, applies
relay actions, and returns decision-to-decision SMDP transitions. The training
driver updates the actor, critic parameters, and Lagrange multipliers without
directly modifying simulator state.

Each decision epoch begins after all physical events at time $t_k$ have been
processed. If relay capacity is exceeded, the controller repeatedly selects
same-time rejection or eviction actions
$\mathbf d_k=(d_{k,0},\ldots,d_{k,J_k-1})$
until the overflow is resolved, recomputing the observation and mask after
each decision. Scheduling is then reconsidered on the resulting state. The
complete relay action is therefore $a_k=(\mathbf d_k,q_k)$,
with no physical time elapsing between its internal drop and scheduling
stages.

After the compound action is completed, DTNSim advances to the next
decision epoch. Following an elective wait while a feasible transmission
exists, the controller is awakened after at most $30$~s or earlier if a
physical event occurs. The interface returns the next observation,
reward $r_k$, constraint-cost vector, holding time $\delta_k$, termination
status, and the stage information required to reconstruct the compound
policy probability during PPO optimization.

At the terminal time $T$, physical events scheduled exactly at $T$ are
processed before termination. The simulator stops without draining the
network state; unexpired backlog is unresolved rather than counted
as loss. Learned and heuristic controllers use the same physical simulator,
contact plan, traffic realization, and relay-control interface.
\subsection{DTNSim Extensions and Validation}

\vspace{1mm}
\textbf{Shared relay storage:}
Moon and Mars bundles share the finite relay-buffer budget $B_{\max}$.
Bundles already in relay--Earth service or propagation are excluded from
buffer occupancy. DTNSim applies no independent overflow rule; all capacity
violations are resolved through the controller-defined rejection/eviction
sequence.

\vspace{1mm}
\textbf{Contact-aware service:}
Relay--Earth transmission is non-preemptive and single-server. A bundle may
enter service only if its complete serialization fits within the active
contact and its Earth reception occurs strictly before absolute expiry, i.e.,
$t_k+L(b)/R_{RE}\leq t_{\mathrm{end}}$ and
$t_k+L(b)/R_{RE}+\tau_{RE}<t_b^{\mathrm{exp}}$.
Thus, service admission requires timely Earth reception, not merely
transmission completion before contact closure. A buffered bundle
that can no longer satisfy the delivery deadline remains stored until overflow
removal or absolute expiry, but is masked from relay--Earth scheduling.

\vspace{1mm}
\textbf{Delayed reception and event ordering:}
Relay transmission completion releases the server but does not update
destination freshness; the corresponding update occurs only at Earth
reception. Co-timed events follow the deterministic ordering in
Subsection~\ref{subsec:event_types_ordering}, after which all applicable
same-time control stages are completed before physical time advances.

\vspace{1mm}
The frozen DTNSim integration was validated against the Python reference
implementation before policy evaluation. Source-ingress equivalence was
verified over 67 traces and 8,085 bundles with zero timing deviation, and
134 automated tests covered observations, masks, event handling, and metric
computation. Additional regression tests verified shared-buffer accounting,
action feasibility, strict Earth-before-expiry delivery, delayed reception,
co-timed event ordering, and reproducibility under fixed traces and frozen
checkpoints.
After validation, the held-out evaluation campaign compared five controllers
on 64 common traffic traces, yielding 320 evaluation episodes under identical
simulator conditions.

\section{Evaluation Setup}
\label{sec:evaluation_setup}

All policies are evaluated through the common DTNSim execution path of
Section~\ref{sec:dtnsim_implementation}. They use identical traffic
realizations, contact opportunities, relay capacity, expiry semantics, and
physical feasibility rules, and differ only in relay-side action selection.
Table~\ref{tab:ppo_training_parameters} summarizes the frozen simulation,
training, and evaluation configuration.
\vspace{-2mm}
\subsection{Scenario and Training Configuration}

The experiments use the Moon--Mars--relay--Earth topology of
Section~\ref{sec:system}. Source generation, source-side FIFO service,
propagation, shared-buffer admission, relay--Earth transmission, and delayed
Earth reception follow the system model and are not repeated.
The same contact-plan freshness reference is used for all policies.

\begin{table}[t]
\centering
\caption{Frozen evaluation and PPO training configuration.}
\vspace{-2mm}
\label{tab:ppo_training_parameters}
\footnotesize
\setlength{\tabcolsep}{2pt}
\renewcommand{\arraystretch}{1.08}
\begin{tabular}{@{}>{\raggedright\arraybackslash}p{0.37\linewidth}
>{\raggedright\arraybackslash}p{0.61\linewidth}@{}}
\toprule
\textbf{Parameter} & \textbf{Setting} \\
\midrule

Episode interval &
$t_0=1800$~s, $T=16200$~s; $H=14400$~s \\

Traffic generation &
Independent Poisson; mean interarrival:
Moon $180$~s, Mars $360$~s \\

Bundle sizes &
$20/100/300$~Mbit with probabilities $0.5/0.3/0.2$ \\

Bundle TTL &
$5400$~s from original generation \\

Relay capacity $B_{\max}$ &
$1.2\times10^9$~bits \\

Contact cycle $T_c$ &
$1800$~s \\

Contact windows &
$M\!\to\!R:[0,1200)$;
$S\!\to\!R:[900,1500)$;
$R\!\to\!E:[1200,1800)$ \\

Link rates &
$2$~Mbit/s on all links \\

Propagation delays &
$\tau_{MR}=1.3$~s;
$\tau_{SR}=600$~s;
$\tau_{RE}=0.12$~s \\

Reference bundle size &
$L^{\mathrm{ref}}=20$~Mbit \\

Freshness objective &
$w_M=w_S=1$; $\phi_M(x)=\phi_S(x)=x$ \\

Initial Earth AoI &
$\Delta_M(t_0)=1190.0$~s;
$\Delta_S(t_0)=1179.88$~s \\

Fairness rates &
$\rho_M=0.8441$; $\rho_S=0.8364$ \\

GAE time scale &
$\tau_{\mathrm{GAE}}=900$~s \\

PPO clipping &
$\epsilon_{\mathrm{ppo}}=0.2$ \\

Actor learning rate &
$10^{-4}\!\rightarrow10^{-5}$ (linear decay) \\

Critic learning rate &
$10^{-3}\!\rightarrow10^{-4}$ (linear decay) \\

Entropy coefficient &
$0.01\!\rightarrow0.001$ (linear decay) \\

PPO batching &
$16$ episodes/update; $4$ PPO epochs;
$512$ physical transitions/minibatch \\

Training duration &
$2048$ episodes; random initialization \\

Dual update &
$\eta_\lambda=0.05$;
$\boldsymbol{\lambda}_0=(0.1,0.1,0,0)$ \\

Reward/cost scales &
$s_R=HT_c$;
$(s_{\mathrm{drop}},s_{\mathrm{exp}},
s_{\mathrm{fair},M},s_{\mathrm{fair},S})
=(17,1,H,H)$ \\

Optimization safeguards &
max.\ gradient norm $0.5$; KL threshold $0.03$ \\

Constraint budgets &
$(d_{\mathrm{drop}},d_{\mathrm{exp}},
d_{\mathrm{fair},M},d_{\mathrm{fair},S})
=(17,1,0,0)$ \\

Held-out evaluation &
$5$ controllers $\times$ $64$ common traffic traces
($320$ evaluation episodes) \\

\bottomrule
\end{tabular}
\vspace{-4mm}
\end{table}

\vspace{-2mm}
\subsection{Compared Policies}
\label{subsec:compared_policies}

We compare the proposed \textbf{Masked Lagrangian PPO} controller with four
deterministic baselines. \textbf{FIFO} schedules the feasible bundle with the
earliest relay arrival, \textbf{LIFO} the latest relay arrival,
\textbf{EDF} the earliest absolute expiry, and \textbf{Freshest-first} the
largest generation timestamp. During overflow, each baseline discards its
least-preferred bundle under the corresponding ordering: FIFO discards the
latest arrival, LIFO the earliest arrival, EDF the latest expiry, and
Freshest-first the smallest generation timestamp. Remaining ties are resolved
using the stable bundle identifier. All policies use the same candidate sets
and physical feasibility rules.
\vspace{-2mm}
\subsection{Evaluation Protocol and Metrics}

Each policy is evaluated on the same $64$ held-out stochastic traffic traces,
ensuring identical exogenous conditions across policy comparisons. Reported
performance values are arithmetic means over these common traces. All training
hyperparameters and evaluation rules were fixed before the held-out evaluation.

The primary performance metric is the time-average aggregate excess-age cost
$\bar G\triangleq
\frac{1}{H}\int_{t_0}^{T}G(t)\,dt$,
where lower values indicate less excess staleness relative to the
contact-plan reference.
With unit source weights and linear penalties, this equals the time-average
aggregate excess AoI used for policy comparison.

Secondary diagnostics include source-specific excess, drop and
expiry counts, delivered traffic, relay--Earth contact utilization,
cumulative useful freshness progress
$\sum_{k=0}^{K_T-1}y_{i,k}$, terminal deficits $D_i(K_T)$, and episodic
constraint returns. Relay--Earth contact utilization is defined as transmitted
bits divided by the total available relay--Earth contact capacity. Constraint
compliance is assessed against the budgets in
Table~\ref{tab:ppo_training_parameters}.
\vspace{-2mm}
\section{Results and Discussion}
\label{sec:results}

We compare Masked Lagrangian PPO with FIFO, LIFO, EDF, and Freshest-first
under the common evaluation protocol of Section~\ref{sec:evaluation_setup}.
All values are computed over the same $64$ held-out stochastic traffic traces.
\vspace{-2mm}
\subsection{Freshness Performance}

Table~\ref{tab:freshness_results} reports the time-average aggregate and
source-specific excess AoI. Masked Lagrangian PPO achieves the lowest
aggregate excess AoI among the evaluated policies.

\begin{table}[t]
\centering
\caption{Excess-AoI performance over the $64$ held-out
traces. Values are means in seconds; lower is better.}
\vspace{-1.5mm}
\label{tab:freshness_results}
\footnotesize
\setlength{\tabcolsep}{8pt}
\renewcommand{\arraystretch}{1.05}
\begin{tabular}{lccc}
\toprule
\textbf{Policy} &
\textbf{Aggregate} $\downarrow$ &
\textbf{Moon} $\downarrow$ &
\textbf{Mars} $\downarrow$ \\
\midrule
Masked Lag. PPO &
$\mathbf{887.63}$ &
$\mathbf{216.64}$ &
$\mathbf{670.98}$ \\

FIFO &
$1747.87$ &
$686.89$ &
$1060.97$ \\

LIFO &
$993.57$ &
$232.83$ &
$760.74$ \\

EDF &
$1692.38$ &
$806.15$ &
$886.23$ \\

Freshest-first &
$1009.56$ &
$232.82$ &
$776.74$ \\
\bottomrule
\end{tabular}
\vspace{-1mm}
\end{table}

Relative to FIFO and EDF, the controller reduces mean aggregate
excess AoI by $49.2\%$ and $47.6\%$, respectively. Relative to the
LIFO and Freshest-first baselines, the reductions are $10.7\%$ and $12.1\%$.
The improvement is present for both Moon and Mars, showing that the
aggregate result is not obtained by sacrificing one source. Because all policies use common traffic traces, performance can be
compared trace by trace. Masked Lagrangian PPO has lower aggregate excess AoI
than FIFO and EDF on all $64$ traces and lower excess AoI than both LIFO and
Freshest-first on $59$ of the $64$ traces. Thus, the mean advantage is broadly
distributed across the evaluation set rather than driven by a small
number of favorable traces.
\vspace{-3mm}
\subsection{Operational Performance}

Table~\ref{tab:operational_results} summarizes relay operation under the
heterogeneous bundle-size distribution. The bundle-based and bit-based
delivery metrics are both reported because they need not rank policies
identically when bundle sizes differ.

\begin{table*}[t]
\centering
\caption{Mean operational performance over $64$ traces.
Expiries count relay-side events used in the constraint.}
\vspace{-1mm}
\label{tab:operational_results}
\footnotesize
\setlength{\tabcolsep}{4pt}
\renewcommand{\arraystretch}{1.05}
\begin{tabular}{lccccccc}
\toprule
\textbf{Policy} &
\textbf{Drops} $\downarrow$ &
\textbf{Exp.} $\downarrow$ &
\textbf{Delivered} &
\textbf{Delivered Gbit} &
\textbf{Bundle DR} &
\textbf{Bit DR} &
$U_{\mathrm{contact}}$ \\
\midrule
Masked Lag. PPO &
\textbf{10.02} & 0.05 & \textbf{103.72} & 8.086 &
\textbf{0.860} & 0.686 & 0.845 \\

FIFO &
16.20 & \textbf{0.00} & 96.28 & \textbf{8.578} &
0.801 & \textbf{0.727} & \textbf{0.897} \\

LIFO &
17.88 & 0.20 & 95.42 & 8.286 &
0.794 & 0.703 & 0.865 \\

EDF &
16.91 & \textbf{0.00} & 95.61 & 8.577 &
0.796 & \textbf{0.727} & 0.896 \\

Freshest-first &
16.78 & 0.17 & 96.56 & 8.262 &
0.803 & 0.701 & 0.863 \\
\bottomrule
\end{tabular}
\vspace{-3mm}
\end{table*}

The proposed controller produces substantially fewer overflow drops and
delivers the largest number of bundles, giving the highest bundle delivery
ratio. However, it delivers fewer total bits and uses less relay--Earth
contact capacity than the baselines. Thus, its freshness improvement is not
obtained by transmitting more data or occupying more downlink capacity.
Instead, the results indicate more selective use of the available transmission
opportunities according to information freshness.

The distinction between bundle and bit delivery is important in this setting:
Masked Lagrangian PPO achieves a bundle delivery ratio of $0.860$ while its
bit delivery ratio is $0.686$. This difference reflects the heterogeneous
bundle sizes and prevents bundle-count delivery alone from being interpreted
as a throughput metric.

\begin{figure}
\centering
    \includegraphics[width=0.5\textwidth]
    {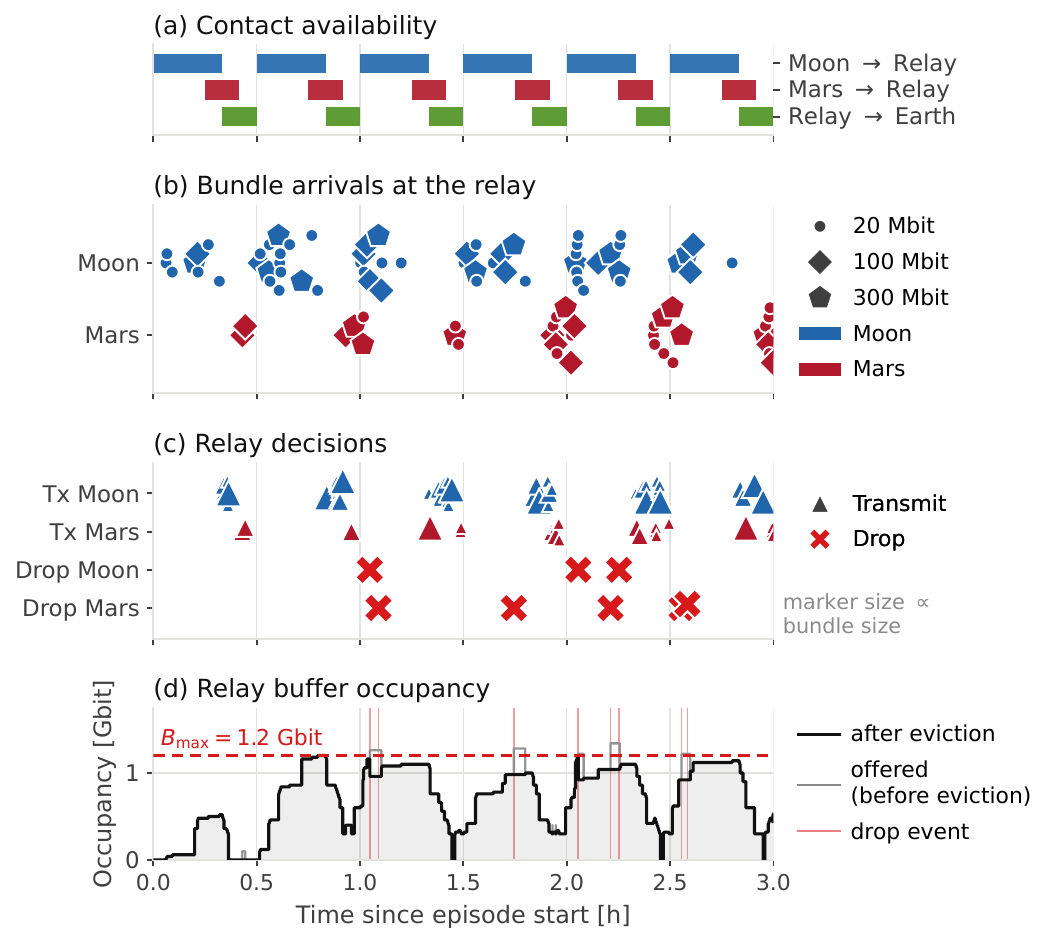}
    \vspace{-6mm}
    \caption{Relay operation under the PPO policy.}
    \label{fig:contact_decision_timeline}
    \vspace{-3mm}
\end{figure}

Figure~\ref{fig:contact_decision_timeline} provides an event-level view of
the relay behavior underlying the aggregate results in Table~\ref{tab:operational_results}.
Over the representative three-hour interval, Moon and Mars bundles arrive
asynchronously with heterogeneous sizes of 20, 100, and 300~Mbit, while
their transmission opportunities are restricted by the scheduled contacts.
The relay consequently selects different buffered bundles for transmission
and invokes drop decisions when incoming traffic creates buffer pressure.
As seen in Fig.~\ref{fig:contact_decision_timeline}(d), the offered
pre-eviction occupancy can exceed the $B_{\max}=1.2$~Gbit capacity, whereas
the post-decision occupancy is maintained within the buffer limit. The
timeline therefore illustrates how scheduling and overflow-management
decisions interact with intermittent connectivity, heterogeneous bundle
sizes, and finite relay storage.

\subsection{Fairness and Constraint Compliance}
Table~\ref{tab:constraint_results} reports the empirical mean episodic
constraint returns. The corresponding budgets are
$(17,1,0,0)$.

\begin{table*}[t]
\centering
\caption{Mean episodic constraint returns over the $64$ held-out traces.
A policy is mean-feasible when all four empirical means satisfy their
corresponding budgets.}
\vspace{-1.5mm}
\label{tab:constraint_results}
\footnotesize
\setlength{\tabcolsep}{6pt}
\renewcommand{\arraystretch}{1.05}
\begin{tabular}{lccccc}
\toprule
\textbf{Policy} &
$\widehat J_{C_{\mathrm{drop}}}$ &
$\widehat J_{C_{\mathrm{exp}}}$ &
$\widehat J_{C_{\mathrm{fair},M}}$ &
$\widehat J_{C_{\mathrm{fair},S}}$ &
\textbf{Mean-feasible} \\
\midrule
\textbf{Budget} &
$17$ & $1$ & $0$ & $0$ & --- \\
\midrule
Masked Lag. PPO &
10.02 & 0.05 & $-2007.89$ & $-1693.17$ & Yes \\

FIFO &
16.20 & 0.00 & $-1764.52$ & $-1138.05$ & Yes \\

LIFO &
17.88 & 0.20 & $-2010.25$ & $-1659.19$ & No \\

EDF &
16.91 & 0.00 & $-1592.67$ & $-1452.85$ & Yes \\

Freshest-first &
16.78 & 0.17 & $-2010.25$ & $-1658.08$ & Yes \\
\bottomrule
\end{tabular}
\vspace{-2mm}
\end{table*}

Masked Lagrangian PPO satisfies all four budgets on the empirical mean.
LIFO is the only selected baseline that is not mean-feasible, because its
mean drop return $17.88$ exceeds the budget of $17$. All five policies have
negative mean source-specific fairness returns, indicating that their mean
delivered freshness progress exceeds the required episodic service level for
both sources.

For additional finite-horizon context, the proposed controller achieves
cumulative useful freshness progress of $14162.9$~s for Moon and $13737.8$~s
for Mars, with terminal reflected deficits of $484.0$~s and $237.2$~s,
respectively. These reflected deficits are diagnostic state variables rather
than constraint returns; positive terminal values therefore do not imply
fairness-constraint violation and do not establish or refute the long-run
rate-stability condition of Lemma~\ref{lem:deficit_stability_fairness}.
\subsection{Overall Discussion}
The results show a consistent trade-off between freshness and transmitted
volume. Masked Lagrangian PPO achieves the lowest aggregate and per-source
excess AoI while satisfying the empirical mean operational and fairness
budgets. Its advantage is observed on every trace against FIFO and EDF and on
$59$ of $64$ traces against LIFO and Freshest-first. At the same time, the
controller transmits fewer bits and uses less relay--Earth contact capacity
than the deterministic baselines.

Accordingly, the observed freshness improvement should be interpreted as an
improvement in relay-side information selection rather than increased
throughput. The results are descriptive statistics over the fixed held-out
policy evaluation and do not constitute a claim of statistical significance
or robustness to independent retraining.
\vspace{-1mm}
\vspace{-1mm}
\section{Conclusion}
This paper developed a goal-oriented relay-control framework for deep-space
DTNs with stochastic heterogeneous traffic, finite relay storage, intermittent
contacts, and delayed Earth reception. Joint scheduling and dropping are
formulated as a constrained event-driven SMDP and solved using Masked
Lagrangian PPO with feasibility-aware action masking and variable-cardinality
set encoding. The framework is implemented and evaluated in DTNSim.

Across $64$ common held-out traffic traces, the proposed controller achieves
the lowest mean aggregate excess AoI among the evaluated policies:
$887.63$~s, compared with $993.57$~s for LIFO, $1009.56$~s for
Freshest-first, $1692.38$~s for EDF, and $1747.87$~s for FIFO. It also
satisfies the empirical mean drop, expiry, and source-specific fairness
budgets. These results show that relay decisions based on destination-side
freshness and system context can outperform fixed queue-ordering heuristics
under the considered deep-space DTN scenario.

The present study assumes a known contact plan, fixed routing, a single
controlled relay, and non-preemptive no-resume bundle transmission. Future
work will consider component-wise ablation, contact-plan uncertainty,
partial-transfer service, multi-relay control, and validation with operational
DTN implementations.
\vspace{-2mm}
\section*{Acknowledgment}

This work was supported by the ERC Advanced Grant GOSPACE (No.~101122990), TÜBİTAK Grant 123C572 (GOREV), and the 1515 Frontier R\&D Laboratories Support Program for the Türk Telekom 6G R\&D Lab (Project No.~5249902). The views expressed are solely those of the authors and do not necessarily reflect those of the European Union or the European Research Council Executive Agency, neither of which can be held responsible for them. The authors used AI for language refinement and figure polishing. All AI-assisted material was verified by the authors for consistency with the system architecture.

\vspace{-1mm}
{\raggedright\printbibliography%
}

\end{document}